\documentclass[11pt]{article}

\usepackage[margin=1in]{geometry}
\usepackage{amsmath,amssymb,amsthm,mathtools}
\usepackage[colorlinks=true,linkcolor=blue,citecolor=blue,urlcolor=blue]{hyperref}
\usepackage{aliascnt}
\usepackage[capitalize,noabbrev]{cleveref}
\usepackage{enumitem}
\usepackage{xcolor}
\usepackage{tikz}
\usetikzlibrary{arrows.meta}

\newtheorem{theorem}{Theorem}[section]
\newaliascnt{lemma}{theorem}
\newtheorem{lemma}[lemma]{Lemma}
\aliascntresetthe{lemma}
\newaliascnt{proposition}{theorem}
\newtheorem{proposition}[proposition]{Proposition}
\aliascntresetthe{proposition}
\newaliascnt{corollary}{theorem}

\aliascntresetthe{corollary}
\newaliascnt{definition}{theorem}
\newtheorem{definition}[definition]{Definition}
\aliascntresetthe{definition}
\newaliascnt{remark}{theorem}
\newtheorem{remark}[remark]{Remark}
\aliascntresetthe{remark}
\crefname{theorem}{Theorem}{Theorems}
\Crefname{theorem}{Theorem}{Theorems}
\crefname{lemma}{Lemma}{Lemmas}
\Crefname{lemma}{Lemma}{Lemmas}
\crefname{proposition}{Proposition}{Propositions}
\Crefname{proposition}{Proposition}{Propositions}
\crefname{corollary}{Corollary}{Corollaries}
\Crefname{corollary}{Corollary}{Corollaries}
\crefname{definition}{Definition}{Definitions}
\Crefname{definition}{Definition}{Definitions}
\crefname{remark}{Remark}{Remarks}
\Crefname{remark}{Remark}{Remarks}

\usepackage{mathtools}

\usepackage{braket}

\title{Quantum Communication Lower Bounds for Search Problems\\via Matrix Discrepancy}

\author{
Minbo Gao\thanks{Institute of Software, Chinese Academy of Sciences, and University of Chinese Academy of Sciences. Email: \href{mailto:gaomb@ios.ac.cn}{\nolinkurl{gaomb@ios.ac.cn}} or
\href{mailto:gmb17@tsinghua.org.cn}{\nolinkurl{gmb17@tsinghua.org.cn}}.}
\and
Chenghua Liu\thanks{Institute of Software, Chinese Academy of Sciences, and University of Chinese Academy of Sciences. Email: \href{mailto:liuch.russell@gmail.com}{\nolinkurl{liuch.russell@gmail.com}}.}
\and
Guangxu Yang \thanks{University of Southern California. Email: \href{mailto:guangxuy@usc.edu}{\nolinkurl{guangxuy@usc.edu}}. Research supported by NSF CAREER award 2141536.}
\and
Tianyi Zhang \thanks{State Key Laboratory for Novel Software Technology, Nanjing University, Email: \href{mailto:tianyiz25@nju.edu.cn}{\nolinkurl{tianyiz25@nju.edu.cn}}. Research supported by the Fundamental and Interdisciplinary Disciplines Breakthrough Plan of the Ministry of Education of China (No. JYB2025XDXM118) and the ``111 Center'' (No. B26023).}}

\date{}

\begin{document}
\maketitle

\begin{abstract}
We study one-way quantum communication lower bounds for search problems.
Unlike decision problems, search problems can have many valid outputs, which pose a fundamental barrier to standard quantum lower-bound techniques. We overcome this by developing a novel method based on matrix discrepancy, which allows us to bound the output measurements of a quantum protocol jointly.

As applications of our method, we establish the first tight quantum lower bounds for two fundamental search problems in some natural parameter regimes: collision finding and triangle finding. For collision finding, we prove a tight \(\Omega(N^{1/4})\) one-way quantum communication lower bound. Previously, the best-known quantum communication lower bound for collision finding was $\Omega(N^{1/12})$ due to~\hyperlink{cite.GoosJain22}{Göös and Jain (RANDOM 2022)}, and no stronger bound was known even under the one-way restriction. For triangle finding in graph streams, we prove a one-pass quantum streaming space lower bound of \(\Omega\left(\sqrt{\Delta_V}\right)\) for graphs with $m$ edges,
$\Theta(m)$ triangles, and constant $\Delta_E$, where \(\Delta_V\) and \(\Delta_E\) denote the maximum number of triangles sharing a common vertex and edge, respectively, under the condition that \(1\le \Delta_V\le m^{2/3}\). This constitutes the first nontrivial quantum space lower bound in this regime, matching the classical upper bound
of~\hyperlink{cite.JayaramKallaugher21}{Jayaram and Kallaugher (RANDOM 2021)} up to logarithmic factors. Notably, our method also recovers the classical lower bound
of~\hyperlink{cite.KallaugherPrice17}{Kallaugher and Price (SODA 2017)} through an entirely different argument, 
avoiding their Boolean-Hidden-Matching reduction that breaks down for quantum protocols.
\end{abstract}

\thispagestyle{empty}
\clearpage
\setcounter{page}{1}

\section{Introduction}
\label{sec:introduction}
One-way quantum communication asks how many qubits Alice must send in a single message for Bob to solve a two-party task with bounded error. Alice, given an input \(x\), encodes it as a quantum state and sends it to Bob; Bob, given his own input \(y\), chooses a measurement and outputs either a value \(f(x,y)\) or, for a search relation, a witness that is valid for the pair \((x,y)\). Proving lower bounds in this model is a fundamental task with many applications, including quantum streaming lower bounds \cite{NayakTouchette17, Kallaugher21,ArunachalamDoriguello24}, extension complexity \cite{FPTW12,lee2015lower}, the Matrix Spencer problem \cite{HopkinsRaghavendraShetty22}, coding theory \cite{Nayak99,AmbainisNayakTaShmaVazirani2002,  KerenidisDeWolf04}, and quantum finite automata \cite{Nayak99,AmbainisNayakTaShmaVazirani2002, Klauck07Neciporuk}.

For total and partial Boolean functions, several powerful lower-bound techniques are available. Quantum information complexity methods are particularly useful for bounded-round quantum communication \cite{JainRadhakrishnanSen03, braverman2018near} and amortized lower bounds \cite{jain2003direct, Touchette15}. Norm and rank methods \cite{Razborov03,Klauck07, LinialShraibman09, LeeShraibman09}, including approximate-degree-to-approximate-rank lifting \cite{ShiZhu09,LeeZhang10,Sherstov11PatternMatrix}, give lower bounds for composed Boolean functions. Fourier-analytic methods, especially matrix-valued hypercontractive inequalities, provide another approach to one-way quantum communication lower bounds~\cite{BenAroyaRegevDeWolf08} and have been extended to quantum streaming lower bounds~\cite{KallaugherParekh22,ArunachalamDoriguello24}.
At a high level, these techniques turn a protocol into a single object to be bounded, such as an acceptance operator, a sign matrix, or a matrix-valued Boolean function. This fits decision problems, where each input pair has one target bit and Bob's measurement can be summarized by its acceptance probability.

Search relations with many valid outputs pose a different challenge. Here Bob is not merely deciding one predicate. For each Bob input, his quantum strategy can be described as a positive operator-valued measure (POVM) over a large output space, and correctness is the total probability assigned to the subset of witnesses that are valid for the joint input. Thus, a lower bound must control how an entire family of POVM elements can correlate with a witness set determined jointly by Alice and Bob. Existing Boolean-function techniques often do not apply to this object directly, and decision-to-search reductions may lose polynomial factors. For instance, in collision finding, the previous quantum lower bound obtained through decision-to-search reductions was only $\Omega(N^{1/12})$, far below the birthday-paradox upper bound $\widetilde{O}(N^{1/4})$\footnote{Throughout the paper, $\widetilde{O}$ and $\widetilde{\Omega}$ suppress poly-logarithmic factors.}~\cite{GoosJain22}.  These limitations motivate the following question:

\begin{quote}
    \emph{Can we develop direct techniques for proving one-way quantum communication lower bounds for natural search problems?}
\end{quote}

In this paper, we answer this question affirmatively by proposing a measurement-discrepancy method.  Instead of reducing the search task to a Boolean decision problem, we analyze Bob's quantum strategy directly through the positive operator-valued measures (POVMs) associated with his inputs. In our applications, the Bob-side validity condition can be transformed into packing constraints of positive semidefinite (PSD) operators. After subtracting the trivial strategy baseline, the success probability is upper bounded by a matrix-discrepancy quantity: the expected operator norm of a centered random matrix sum under PSD packing constraints. We prove the required discrepancy bounds using noncommutative Khintchine inequalities and matrix concentration estimates, yielding the desired one-way quantum lower bounds.

This viewpoint is inspired by the connection between one-way quantum communication and matrix discrepancy in~\cite{HopkinsRaghavendraShetty22}, but our use is different. Their work derives matrix-discrepancy bounds from quantum communication protocols; here, on the contrary, discrepancy estimates are used inside the lower-bound proof to control the success probability of quantum search protocols.

We apply our method to two representative search problems with many valid outputs: \emph{collision finding}, where Bob must output a common collision certified jointly by Alice's and Bob's inputs, and \emph{streaming triangle finding}, where the algorithm must output an actual triangle in a graph stream. For collision finding, we obtain an $\Omega(N^{1/4})$ one-way quantum lower bound, matching the birthday-paradox upper bound in the $M = N + \Omega(N)$ regime. For triangle finding, we prove an $\Omega(m\sqrt{\Delta_V}/T)$ one-pass quantum streaming lower bound on a hard family with $T=\Theta(m)$, $\Delta_E=O(1)$, and $1\le \Delta_V\le m^{2/3}$, recovering the classical $\sqrt{\Delta_V}$ dependence in the quantum setting where the known reduction to Boolean-Hidden-Matching route is unavailable.


\subsection{Our Results}
\label{subsec:introduction-results}
\paragraph{Collision finding.}
For integers \(M,N\) with \(\sqrt N\in\mathbb N\),  the bipartite collision-finding problem \(\mathsf{ColFind}_{N,M}\) is defined as follows. 
Alice receives \(x=(x_1,\ldots,x_M)\in[\sqrt N]^M\), Bob receives \(y=(y_1,\ldots,y_M)\in[\sqrt N]^M\), and Bob must output a pair \(i<j\) satisfying $x_i=x_j$ and $y_i=y_j$.

\begin{theorem}
\label{thm:introduction-collision-informal}
For \(M>N\), every one-way quantum protocol that solves \(\mathsf{ColFind}_{N,M}\) with constant success probability over independent uniform inputs must send \(\Omega(N^{1/4})\) qubits.
\end{theorem}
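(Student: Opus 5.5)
We will bound the success probability of a protocol with $q$ qubits directly. Alice maps $x$ to a density matrix $\rho_x$ of dimension $d=2^q$. For each $y$, Bob applies a POVM $\{E^y_{ij}\}_{i<j}$ (plus an abort outcome), so
\[
\Pr[\mathrm{succ}]=\mathbb{E}_{x,y}\sum_{i<j}\mathbf 1[x_i=x_j]\,\mathbf 1[y_i=y_j]\,\mathrm{tr}(\rho_x E^y_{ij}).
\]
Write $\mathbf 1[x_i=x_j]=N^{-1/2}+c_{ij}(x)$, where $c_{ij}$ is centered. The constant term is the trivial baseline: $\mathbb{E}_{x,y}\sum_{ij}N^{-1/2}\mathbf 1[y_i=y_j]\mathrm{tr}(\rho_xE^y_{ij})\le N^{-1/2}$, since $\sum_{ij}E^y_{ij}\preceq I$. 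It remains to bound the centered part. Set $A^y=\sum_{i<j:\,y_i=y_j}c_{ij}(x)E^y_{ij}$, so that $\mathrm{tr}(\rho_x A^y)\le \|A^y\|$. For fixed $y$, the matrices $\{E^y_{ij}\}$ with $y_i=y_j$ form a PSD packing, meaning they are PSD with $\sum E^y_{ij}\preceq I$. The centered term is therefore at most $\mathbb{E}_y\sup_{\{E_{ij}\}\text{ packing}}\mathbb{E}_x\bigl\|\sum_{ij}c_{ij}(x)E_{ij}\bigr\|$. The sup sits inside $\mathbb{E}_x$ only after we allow Bob's POVM to depend on $y$ and not on $x$. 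Strictly, $\rho_x$ is fixed, so the bound should be $\mathbb{E}_y \mathbb{E}_x \|\cdot\|$ with $E$ depending on $y$ alone, and this is what we need.

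The main step, and the main obstacle, is a matrix discrepancy estimate: for any PSD packing $\{E_{ij}\}$ supported on the pairs colliding in $y$, show $\mathbb{E}_x\|\sum c_{ij}(x)E_{ij}\|\lesssim \sqrt{q}\cdot(\text{small})$. The $c_{ij}$ are not independent. They are collision indicators of a random colouring of $[M]$ into $\sqrt N$ classes. I would decouple them by writing $\mathbf 1[x_i=x_j]=N^{-1/2}\sum_{a}\omega^{a(x_i-x_j)}$ with characters over $\mathbb Z_{\sqrt N}$. The centered part then becomes $N^{-1/2}\sum_{a\neq0}\sum_{ij}\chi_a(x_i)\overline{\chi_a(x_j)}E_{ij}$, which is a matrix-valued degree-2 chaos in the independent variables $x_i$. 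Decoupling followed by the noncommutative Khintchine inequality (or matrix Rosenthal/Bernstein) bounds its expected norm by $\sqrt{\log d}$ times the square root of $\|\sum E_{ij}^2\|$-type variances, plus a max term. The packing constraint gives $E_{ij}^2\preceq E_{ij}$, and $\sum_j E_{ij}\preceq I$. The per-$a$ variance is then at most $N^{-1}$ times a row-sum quantity, and summing over $\sqrt N$ characters yields a centered contribution of roughly $O(\sqrt{q}\,N^{-1/4})$ once the counting is done right. Here $M>N$ enters through the typical number $\approx M^2/\sqrt N$ of $y$-collision pairs, and the analysis must confirm the final exponent. The delicate part is controlling the row-sum variance $\|\sum_i(\sum_j E_{ij})^2\|$ versus $\|\sum_{ij}E_{ij}\|$. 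I would first try a dyadic bucketing of rows by their load so that the chaos bound balances.

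Combining the two terms gives $\Pr[\mathrm{succ}]\le N^{-1/2}+O(\sqrt{q}\,N^{-1/4})+$ tail terms. We may reduce to the case $M=N+\Theta(N)$ by restricting to the first coordinates, since a harder instance only adds pairs; the bound must not grow with $M$, and this is where we need the trick of fixing an $N$-sized sub-instance and padding. Demanding constant success then forces $q=\Omega(N^{1/4})$ once the discrepancy bound is shown to be $O(\sqrt{q/N^{1/2}})$ or better. If the Khintchine route only yields $\sqrt{q}$ dependence, the result would instead read $q=\Omega(N^{1/2})$ or the wrong exponent. In that case I would recalibrate by using the $\log d=q$ factor inside a Schatten-$p$ norm with $p\approx q$, which is the standard way such arguments land at $q\gtrsim N^{1/4}$.
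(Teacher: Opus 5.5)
\textbf{The centered-term reduction fails.} The fatal step is bounding $\operatorname{tr}(\rho_xA^y)\le\|A^y\|$ separately for each $y$. After that you need $\mathbb E_x\bigl\|\sum_{ij}c_{ij}(x)E_{ij}\bigr\|$ to be small for every PSD packing $\{E_{ij}\}$ supported on $y$-colliding pairs, and that estimate is false.

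Since $M>N$, every $y$ has at least $(M-\sqrt N)/2$ disjoint $y$-colliding pairs. For large $N$, pick $d=2^q\ge 4\sqrt N$ of them, $p_1,\dots,p_d$, and let $E_{p_\ell}=|\ell\rangle\langle\ell|$; this is a complete POVM. Then $\|\sum_{ij}c_{ij}(x)E_{ij}\|=\max_\ell|c_{p_\ell}(x)|=1-N^{-1/2}$ as soon as one $p_\ell$ collides in $x$. That happens with probability $1-(1-N^{-1/2})^d\ge 1-e^{-4}$. So your centered term is $\Omega(1)$ already with $q=\lceil\log_2(4\sqrt N)\rceil$ qubits.

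No Khintchine, chaos, or dyadic-bucketing argument can repair this. Putting the norm inside $\mathbb E_y$ effectively lets Alice's state be tuned to Bob's input. For fixed $y$ a single $E^y_{ij}$ can be as large as $I$, so nothing small is available.

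\textbf{The missing idea.} Average over $y$ \emph{before} taking the norm: $\mathbb E_y\operatorname{tr}(\rho_xA^y)=\operatorname{tr}(\rho_x\,\mathbb E_yA^y)$. Equivalently, work with $P_{ij}:=\mathbb E_y[\mathbf 1[y_i=y_j]E^y_{ij}]$. These still satisfy $\sum_{ij}P_{ij}\preceq I$, and now also $P_{ij}\preceq N^{-1/2}I$ because $\Pr[y_i=y_j]=N^{-1/2}$. The paper's proof exploits exactly this pointwise bound:
\begin{itemize}
\item Take a random bipartition $\mathcal L\cup\mathcal R$ and condition on $(x_b)_{b\in\mathcal R}$. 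The cross term becomes $\sum_{a\in\mathcal L}(B_{a,x_a}-\overline B_a)$ with buckets $B_{a,c}=\sum_{b\in\mathcal R:\,x_b=c}P_{a,b}$.
\item A Khintchine bound then gives $\sqrt{\log d}\,\sqrt Z$, where $Z=\max_{a,c}\|B_{a,c}\|$.
\item Matrix Chernoff, using $P_{a,b}\preceq N^{-1/2}I$, gives $\mathbb E Z\lesssim(\log d+\log N)/\sqrt N$.
\end{itemize}

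\textbf{Your quantitative target is also wrong.} A bound $p_{\mathrm{succ}}\le N^{-1/2}+O(\sqrt q\,N^{-1/4})$ would force $q=\Omega(N^{1/2})$, not $\Omega(N^{1/4})$. That contradicts the one-way birthday protocol with $O(N^{1/4}\log N)$ bits, so no correct argument can produce it. The right bound is of order $\sqrt{q(q+\log N)}/N^{1/4}$. The second factor of $q$ comes from the maximum over buckets in the matrix Chernoff step, not from any Schatten-norm recalibration.

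\textbf{The role of $M$.} $M$ plays no role in the argument: the paper's bound is uniform in $M$, and $M>N$ only guarantees that a valid output exists. So your restriction/padding step is unnecessary. Also, adding coordinates makes the instance easier, not harder.
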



The main difficulty of proving the above theorem lies in the regime $M=N+\Omega(N)$. When \(M=N+o(N)\), Itsykson and Riazanov~\cite{ItsyksonRiazanov21} proved a tight \(\Omega(\sqrt N)\) quantum one-way lower bound via a randomized reduction from set disjointness. For \(M=N+\Omega(N)\), G\"o\"os and Jain~\cite{GoosJain22} gave a \(\widetilde O(N^{1/4})\) birthday-paradox protocol, but only proved an \(\Omega(N^{1/12})\) quantum lower bound, even for one-way protocols. Yang and Zhang~\cite{YangZhang24} later obtained the matching \(\widetilde\Omega(N^{1/4})\) lower bound classically via density increments, but their argument does not extend to quantum communication; nor does the standard quantum information-complexity approach seem to suffice~\cite{BauerFarshimMazaheri18}. Our theorem closes this gap with an \(\Omega(N^{1/4})\) one-way quantum lower bound, tight up to logarithmic factors in the hard regime.

\paragraph{Triangle finding in graph streams.}
For a simple undirected graph $G$, let $T$ be its number of triangles, $\Delta_E$ the maximum number of triangles sharing an edge, and $\Delta_V$ the maximum number sharing a vertex. In triangle-finding streaming problem, a one-pass quantum algorithm receives an arbitrary-order insertion stream of $G$'s edges and must output a triangle.

\begin{theorem}
\label{thm:introduction-triangle-informal}
Every one-pass quantum streaming algorithm that outputs a triangle with success probability greater than $2/3$  uses
\(
    \Omega\!\left(\sqrt{\Delta_V}\right)
\)
qubits of space on a hard family of graphs with \(m\) edges, \(T=\Theta(m)\), \(\Delta_E=O(1)\), and \(1\le \Delta_V\le m^{2/3}\).
\end{theorem}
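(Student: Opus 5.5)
We reduce streaming to one-way communication and then apply the measurement-discrepancy method. Standard simulation: a one-pass streaming algorithm with $s$ qubits of memory gives a one-way protocol with $s$ qubits of communication. Alice runs the algorithm on her part of the edge stream and sends the memory state; Bob continues on his edges and measures. So it suffices to build a hard distribution over graphs, split into Alice's edges $E_A$ and Bob's edges $E_B$, and prove an $\Omega(\sqrt{\Delta_V})$ one-way quantum lower bound for outputting a triangle.

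\textbf{Hard family.} Following the Kallaugher--Price template, let $k=\Delta_V$ and build about $m/k$ disjoint gadgets, each on vertex sets of size about $\sqrt{k}$. Alice holds a random sparse bipartite graph $x$ between parts $U$ and $W$. Bob holds, for a hub vertex $v$ (or several hubs), edges to a random subset $S\subseteq U\cup W$, so that a triangle is exactly $\{v,u,w\}$ with $(u,w)\in x$ and $u,w\in S$. Parameters are chosen so that each hub lies in about $k$ triangles, each edge lies in $O(1)$ triangles, and $T=\Theta(m)$. The constraint $k\le m^{2/3}$ is what makes the gadget sizes consistent. The key property is that any fixed pair $(u,w)$ is an edge of $x$ with only small probability $p$. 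Bob's valid outputs given $y$ are pairs inside $S\times S$, and the success probability is
\[
\mathbb{E}_{x,y}\sum_{(u,w)\in S_y^2}\operatorname{tr}\bigl(\rho_x F^{y}_{uw}\bigr)\,\mathbf 1[(u,w)\in x].
\]

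\textbf{Discrepancy step.}
1. Write $\mathbf 1[(u,w)\in x]=p+(\mathbf 1[(u,w)\in x]-p)$. The $p$ part is the trivial baseline, bounded by $p\cdot\#\{\text{triangles}\}$-type quantities, which is small.
2. The centered part is at most $\mathbb{E}_y\bigl\|\sum_{(u,w)} F^y_{uw}\otimes\ldots\bigr\|$. More precisely, after replacing $\rho_x$ by a state on $s$ qubits, use the standard bound $\rho_x\preceq 2^{s}\cdot\rho_{\text{avg}}$-style dimension arguments, or the Holevo-free bound $\mathbb{E}_x\operatorname{tr}(\rho_x Z_x)\le \mathbb{E}_x\|Z_x\|$. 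Here $Z_x=\sum_{e}(\mathbf 1[e\in x]-p)\,\bar F_e$, with $\bar F_e$ the $y$-averaged POVM elements. These satisfy the PSD packing constraint $\sum_e \bar F_e\preceq I$ on a $2^s$-dimensional space.
3. The noncommutative Khintchine and matrix Rosenthal inequalities give
\[
\mathbb{E}\|Z_x\|\lesssim \sqrt{p\log 2^{s}}\,\bigl\|\textstyle\sum \bar F_e^2\bigr\|^{1/2}+\log(2^{s})\max\|\bar F_e\|.
\]
Combined with the constraint that each $\bar F_e$ carries weight only about $1/|S|^2$ from the averaging over $y$, this yields success at most $O(s/\sqrt{k})+o(1)$. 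Hence $s=\Omega(\sqrt{k})$.

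\textbf{Main obstacle.} The hard step is making the averaging over Bob's input produce a genuine packing constraint with the right weights. The $F^y_{uw}$ for different $y$ overlap, and the witness set depends on $y$, so one must decouple Bob's subset $S$ from Alice's graph before applying Khintchine. I would first try conditioning on $S$ and exploiting symmetry of the gadget under permutations of $U,W$, so that the averaged operators sum to at most $(|S|^2/|U||W|)\,I$.
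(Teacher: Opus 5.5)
\textbf{Verdict: there is a genuine gap --- the proposed hard family is easy in the required regime, so no $\Omega(\sqrt{\Delta_V})$ bound can be proved for it.}

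Your reduction and the outline of the analysis match the paper: bound $\operatorname{Tr}(\rho_x H_x)\le\|H_x\|$, subtract a baseline, and apply matrix Khintchine to a PSD packing. The problem is the family itself. In your gadget every triangle $\{v,u,w\}$ contains exactly one of Alice's edges, $(u,w)\in x$. The rest of the validity condition, $u,w\in S$, is known to Bob.

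Each triangle contains an edge of $x$ and $\Delta_E=O(1)$, so at least $T/\Delta_E$ edges of $x$ lie in triangles. When $T=\Theta(m)$ this is a constant fraction of $x$. So Alice can send $O(1)$ uniformly random edges of $x$, using $O(\log n)$ bits. Bob checks each candidate himself and outputs one that closes a triangle, succeeding with probability at least $2/3$. Hence no $\Omega(\sqrt{\Delta_V})$ bound holds for this split.

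Your discrepancy step reflects this. The only a priori bound on $\|\bar F_e\|$ is the probability, over Bob's input, that $e$ is covered by a hub. In your regime this averages to $T/|x|=\Theta(1)$. So the Rosenthal term $\log(2^s)\max_e\|\bar F_e\|$ is already of order $s$, and the asserted $O(s/\sqrt{k})$ does not follow.

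The sizes are also inconsistent. A vertex in $k$ triangles whose incident edges each lie in $O(1)$ triangles must have degree $\Omega(k)$. That is impossible if all gadget vertex sets have size about $\sqrt{k}$.

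\textbf{What is missing: a distribution where neither party can certify a witness alone.} The paper uses $\mathsf{TripTri}(r,s)$; in its notation $|X|=r$, $|Y|=s$, and the message dimension is $d$. Alice holds random bijections $F_i:X\to Z_i$ and a random $A:X\times Y\to\{0,1\}$. Bob holds a random bijection $C:[s]\to Y$ that joins $C(i)$ to all of $Z_i$. Triangles are $\{x,C(i),F_i(x)\}$ with $A_{x,C(i)}=1$. Two of the three edges are Alice's, and the certifying bit is indexed by Bob's hidden $C(i)$, so Alice cannot locate it and Bob cannot check it.

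Here the trivial strategy (Alice sends one $(x,i,F_i(x))$) already succeeds with probability $1/2$. That is why the statement's threshold must exceed $1/2$, and why the baseline is $\frac12 I$ rather than something small.

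The proof then proceeds in four steps:
\begin{enumerate}
\item Absorb Bob's randomness into $P_{x,i,z,y}=\mathbb{E}_C[\mathbf 1_{\{C(i)=y\}}M^C_{x,i,z}]$, which satisfies $P\preceq I/s$ and $\sum P\preceq I$.
\item Group these as $Q_{x,y}(F)=\sum_i P_{x,i,F_i(x),y}$.
\item Apply matrix Khintchine over the independent centered bits $A_{x,y}-\frac12$, giving $\sqrt{\log(2d)}\,(\max_{x,y}\|Q_{x,y}(F)\|)^{1/2}$.
\item Use matrix Bernstein over the random bijections to get $\mathbb{E}_F\max_{x,y}\|Q_{x,y}(F)\|\lesssim 1/r+(\log(2d)+\log(rs))/s$.
\end{enumerate}
Take $r=\lceil\sqrt s\rceil$, $s=\Theta(\Delta_V)$, and $\Theta(m/(rs))$ disjoint copies. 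The condition $\Delta_V\le m^{2/3}$ enters through $rs=\Theta(s^{3/2})\le m$. This gives $\log d=\Omega(\sqrt{\Delta_V})$.

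The decoupling you list as the main obstacle is resolved by this refined packing indexed by $(x,i,z,y)$ plus concentration over $F$. It is not resolved by symmetrizing over permutations of $U,W$.
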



In our parameter regime, it matches the general classical space bound
$
\widetilde O\left((m/T)(\Delta_E+\sqrt{\Delta_V})\right)
= \widetilde O(\sqrt{\Delta_V})$
of Jayaram and Kallaugher~\cite{JayaramKallaugher21}, which is tight up to logarithmic factors.\footnote{They give this upper bound for triangle counting, which is related but distinct from triangle finding. However, their sampling procedure can be extended to triangle finding with the same asymptotic space bound.}
Thus, in this regime, triangle finding admits no quantum space advantage. 

On the lower-bound side, the quantum lower bound that follows from the  reduction to INDEX is only \(\Omega(m\Delta_E/T)\)\cite{AmbainisNayakTaShmaVazirani2002,BravermanOstrovskyVilenchik13, Kallaugher21}, which is merely a constant in our parameter setting \(T=\Theta(m)\) and \(\Delta_E=O(1)\). Under this choice of graph parameters, there has been an \(\Omega(\sqrt{\Delta_V})\) classical lower bound~\cite{KallaugherPrice17} via reduction to Boolean Hidden Matching, but it does not hold in the quantum setting because Boolean Hidden Matching admits exponentially efficient one-way quantum protocols \cite{gavinsky2007exponential,BarYossefJayramKerenidis08}.

\subsection{Technical Overview}
\label{subsec:technical-overviews}

\paragraph{The common idea: from search protocols to matrix discrepancy.}
The starting point of our proofs is to treat Bob's whole output \emph{measurement} as the object to be bounded.  Consider a one-way quantum protocol for a search relation $R\subseteq \mathcal A\times \mathcal B\times \mathcal Z$.  On Alice's input $a\in \mathcal{A}$, Alice sends a $d$-dimensional quantum state $\rho_a$; on Bob's input $b\in\mathcal{B}$, Bob applies a POVM $\{M_z^{(b)}\}_{z\in\mathcal Z}$.  For a fixed Alice input $a$, define the averaged success operator
\[
  H_a
  :=
  \mathbb E_b
  \sum_{z\in\mathcal Z}
  \mathbf 1[(a,b,z)\in R]\,M_z^{(b)} .
\]
The success probability conditioned on $a$ is $\operatorname{Tr}(\rho_a H_a)$, and therefore
\[
  p_{\mathrm{succ}}
  \le
  \mathbb E_a \|H_a\|.
\]
Thus, rather than reducing the search problem to a Boolean decision problem, we directly upper bound the largest eigenvalue of the operator collecting all successful outputs.

The main structural step is to separate the Alice-side and Bob-side parts of
the validity condition before bounding the protocol.  For each possible output
\(z\in\mathcal Z\), we introduce a finite set \(\Omega_z\) of refined witnesses
associated with \(z\).  We take these sets to be disjoint and rewrite
\[
  \Omega:=\bigcup_{z\in\mathcal Z}\Omega_z .
\]
In our applications, the validity indicator factors as
\[
  \mathbf 1[(a,b,z)\in R]
  =
  \sum_{\omega\in\Omega_z} X_\omega(a)Y_\omega(b),
\]
where \(X_\omega(a)\) depends only on Alice's input and \(Y_\omega(b)\)
depends only on Bob's input.  Thus \(\omega\in\Omega_z\) should be read as a
refined way in which the output \(z\) can be certified as valid.  In the
refinements used below, the Bob-side factors are nonnegative and satisfy
\(\sum_{\omega\in\Omega_z}Y_\omega(b)\le 1\) for every fixed \(b\) and \(z\).

We absorb the Bob-side factor into Bob's POVM: for every \(z\in\mathcal Z\) and
\(\omega\in\Omega_z\), define
\[
  P_\omega
  :=
  \mathbb E_b\left[Y_\omega(b)M_z^{(b)}\right].
\]
Then
\[
  H_a
  =
  \sum_{z\in\mathcal Z}\sum_{\omega\in\Omega_z}X_\omega(a)P_\omega
  =
  \sum_{\omega\in\Omega}X_\omega(a)P_\omega .
\]
The operators \(P_\omega\) are positive semidefinite and satisfy packing
constraints from POVM normalization:
\[
  P_\omega\succeq 0,
  \qquad
  \sum_{\omega\in\Omega} P_\omega
  =
  \mathbb E_b\sum_{z\in\mathcal Z}\sum_{\omega\in\Omega_z}
  Y_\omega(b)M_z^{(b)}
  \preceq I .
\]
In our applications, each \(P_\omega\) also has an upper bound
\(P_\omega\preceq \beta I\): in collision finding, \(\beta=1/\sqrt{N}\); in triangle
finding, \(\beta=1/s\).  The global packing constraint prevents Bob's
POVM from placing large total mass on many possible witnesses, while the
pointwise bound prevents a single Bob-valid measurement operator from dominating after the
Bob-side randomness has been averaged out.

We observe that, the contribution of $\mathbb E X_\omega$ is the trivial guessing baseline.  After subtracting this baseline, the remaining advantage is captured by the centered operator
\[
  G_a
  =
  \sum_{\omega\in \Omega}
  \bigl(X_\omega(a)-\mathbb E X_\omega\bigr)P_\omega .
\]
The lower-bound task is therefore to prove that \(\mathbb E_a\|G_a\|\) is small for every PSD packing that could be produced by Bob's POVM.  Although Bob's measurement is completely arbitrary and protocol-dependent, the proof uses only the packing constraints forced by POVM normalization and the Bob-side validity condition.

This is where matrix discrepancy enters.  The centered variables $X_\omega(a)-\mathbb E X_\omega$ play the role of random signs or centered random coefficients, while the operators $P_\omega$ form a packed family of matrices.  We bound the resulting random matrix sum using a noncommutative Khintchine inequality, which reduces the problem to controlling the largest grouped operator norm in the packing.  This final control is obtained by a matrix concentration argument tailored to the combinatorics of the problem.  In collision finding, the grouping comes from buckets of equal labels after a decoupling step; in triangle finding, it comes from the hidden block structure induced by random bijections.  Thus both lower bounds follow the same route: convert the search measurement into a PSD packing, subtract the baseline success, and bound the centered advantage by a matrix-discrepancy estimate.

\paragraph{Collision finding: decoupling collision indicators into buckets.}

We first illustrate the method on collision finding.  Alice and Bob receive independent strings
$x,y\in[\sqrt N]^M$, and Bob must output a pair $i<j$ such that $x_i=x_j$ and $y_i=y_j$.
For Bob's POVM $\{M_{i,j}^{(y)}\}_{i<j}$, we absorb the Bob-side collision condition into the measurement by defining
\[
  P_{i,j}
  :=
  \mathbb E_y\bigl[
    \mathbf 1[y_i=y_j]M_{i,j}^{(y)}
  \bigr].
\]
The POVM normalization and $\mathbb{E}[\mathbf 1[y_i=y_j]] =\frac{1}{\sqrt{N}}$ immediately imply
\[
  P_{i,j}\succeq 0,
  \qquad
  P_{i,j}\preceq \frac1{\sqrt N}I,
  \qquad
  \sum_{i<j}P_{i,j}\preceq I.
\]
For a fixed Alice input $x$, the success operator is given by
\[
  H_x
  =
  \sum_{i<j}
  \mathbf 1[x_i=x_j]P_{i,j}.
\]
Centering the Alice-side collision indicators gives
\[
  H_x
  =
  \frac1{\sqrt N}\sum_{i<j}P_{i,j}
  +
  \sum_{i<j}
  \left(
    \mathbf 1[x_i=x_j]-\frac1{\sqrt N}
  \right)P_{i,j}.
\]
The first term is bounded by $I/\sqrt N$, which is the trivial guessing contribution.  Hence, it remains to bound the operator norm of the centered discrepancy 
\[
  G_x
  :=
  \sum_{i<j}
  \left(
    \mathbf 1[x_i=x_j]-\frac1{\sqrt N}
  \right)P_{i,j}.
\]

The indicators $\mathbf 1[x_i=x_j]$ are highly dependent because different pairs may share common coordinates. To expose independent randomness, we randomly bipartition the index set into $(L\cup R)$ and keep only the cross pairs.  It suffices, up to a constant factor, to bound the cross term
\[
  S_{L,R}(x)
  :=
  \sum_{i\in L}
  \sum_{j\in R}
  \left(
    \mathbf 1[x_i=x_j]-\frac1{\sqrt N}
  \right)P_{i,j}.
\]
After conditioning on the labels on the right side, define the bucket operators
\[
  B_{i,c}
  :=
  \sum_{j\in R:x_j=c}P_{i,j},
  \qquad
  B_i
  :=
  \frac1{\sqrt N}\sum_{c\in[\sqrt N]}B_{i,c}.
\]
Then the cross term becomes
\[
  S_{L,R}(x)
  =
  \sum_{i\in L}
  \bigl(B_{i,x_i}-B_i\bigr),
\]
where the remaining randomness comes from the independent labels $(x_i)_{i\in L}$.  A matrix Khintchine inequality bounds this centered sum in terms of the largest bucket norm
\[
  Z:=\max_{i\in L,\ c\in[\sqrt N]}\|B_{i,c}\|.
\]
The packing condition implies $\sum_{i,c}B_{i,c}\preceq I$, while a matrix Chernoff argument over the right-side labels gives
\[
  \mathbb E Z
  \lesssim
  \frac{\log d+\log N}{\sqrt N}.
\]
Consequently,
\[
  \mathbb E_x\|G_x\|
  \lesssim
  \sqrt{
    \frac{\log d(\log d+\log N)}{\sqrt N}
  }.
\]
Combining this with the baseline term yields
\[
  p_{\mathrm{succ}}
  \le
  \frac1{\sqrt N}
  +
  O\left(
    \sqrt{
      \frac{\log d(\log d+\log N)}{\sqrt N}
    }
  \right).
\]
Since $d=2^k$, any protocol with constant success probability must satisfy
$k=\Omega(N^{1/4})$, giving the desired one-way quantum lower bound for collision finding.

\paragraph{Triangle finding: hidden blocks and concentration over bijections.}

The second application, triangle finding, uses the same proof method after a
streaming-to-communication reduction.  The communication problem arising from
our hard distribution is denoted by $\mathsf{TripTri}(r,s)$.  Alice holds
random bijections $F_i:X\to Z_i$ for $i\in[s]$ and a random matrix
$A:X\times Y\to \{0,1\}$, while Bob holds a random bijection
$C:[s]\to Y$.  Bob must output a triple $(x,i,z)$ satisfying
\[
z=F_i(x)
\qquad\text{and}\qquad
A_{x,C(i)}=1 .
\]
In the streaming reduction, Alice inserts the $XZ$- and $XY$-edges and sends
the memory state of the streaming algorithm to Bob; Bob then inserts the
$YZ$-edges.  Any triangle output by the streaming algorithm gives exactly such
a witness.

The quantum lower bound for $\mathsf{TripTri}(r,s)$ mirrors the collision
argument.  For Bob's POVM ${M_{x,i,z}^C}$, define the averaged success
operator
\[
H_{F,A}
:=
\mathbb E_C
\sum_{x\in X}\sum_{i=1}^s
A_{x,C(i)} M_{x,i,F_i(x)}^C .
\]
Writing $A_{x,y}=1/2+\eta_{x,y}$ separates the trivial baseline from the
advantage.  The constant part contributes at most $I/2$, and the centered part
is
\[
G_{F,A}
:=
\mathbb E_C
\sum_{x\in X}\sum_{i=1}^s
\eta_{x,C(i)} M_{x,i,F_i(x)}^C .
\]
As before, the goal is to bound $\mathbb E_{F,A}\|G_{F,A}\|$.  To expose the
matrix-discrepancy structure, define
\[
P_{x,i,z,y}
:=
\mathbb E_C\left[
\mathbf 1_{{C(i)=y}} M_{x,i,z}^C
\right].
\]
These matrices form a PSD packing: each $P_{x,i,z,y}$ is positive
semidefinite and the total mass is bounded by the identity.  With
\[
Q_{x,y}(F):=\sum_{i=1}^s P_{x,i,F_i(x),y},
\]
the centered operator becomes
\[
G_{F,A}
=
\sum_{x\in X}\sum_{y\in Y}
\eta_{x,y} Q_{x,y}(F).
\]
It remains to control the block norms
\(\max_{x,y}\|Q_{x,y}(F)\|\). This is where the random bijections
\(F_1,\ldots,F_s\) are used: they spread the packed measurement mass across the
\(r\) vertices inside each block, and a matrix concentration bound shows that no
block \(Q_{x,y}(F)\) is too large on average. Combined with the matrix Khintchine
bound for the centered signs \(\eta_{x,y}\), this gives the desired upper bound
on \(\mathbb E_{F,A}\|G_{F,A}\|\).

\subsection{Related Work}
\label{subsec:introduction-related-work}

\subsubsection{Communication Lower Bounds for Collision Finding.}
Collision finding is a basic search problem at the interface of communication complexity \cite{GoosJain22}, cryptography \cite{BauerFarshimMazaheri18}, proof complexity \cite{ItsyksonRiazanov21}, and quantum
algorithms \cite{BrassardHoyerTapp97}. In the query model, the problem has a well-understood quantum complexity: the Brassard--H{\o}yer--Tapp  algorithm \cite{BrassardHoyerTapp97} finds a collision in an
\(r\)-to-one function using \(O((N/r)^{1/3})\) queries, and this bound was shown to be optimal by Shi \cite{Shi02}, improving Aaronson's earlier lower bound \cite{Aaronson02}.

Communication versions of collision finding have emerged more recently from several independent motivations. Bauer, Farshim, and Mazaheri
introduced related communication tasks in their study of backdoored random oracles, where communication lower bounds imply security of hash combiners \cite{BauerFarshimMazaheri18}. Itsykson and Riazanov studied collision-type search problems arising from the bit-pigeonhole principle and used communication lower bounds to derive proof-complexity lower bounds\cite{ItsyksonRiazanov21}. G{\"o}{\"o}s and Jain \cite{GoosJain22} subsequently introduced a natural two-party bipartite collision problem, in which Alice and Bob hold complementary parts of the binary encoding of each item, and proved the first polynomial lower bound, \(\Omega(N^{1/12})\), even for quantum communication protocols. They also observed an \(O(N^{1/4}\log N)\) birthday-paradox protocol and conjectured it to be
essentially optimal. For classical randomized communication, this conjecture was resolved by Yang and Zhang, who proved a tight \(\widetilde{\Omega}(N^{1/4})\) lower bound via density-increment arguments \cite{YangZhang24}. Recent work of Beame and Whitmeyer further extended collision-finding lower bounds to multiparty number-in-hand communication, with applications to cutting-plane lower bounds for concise pigeonhole principles \cite{BeameWhitmeyer25}.

\subsubsection{Streaming Lower Bounds for Triangle Finding}
Triangle finding is a canonical subgraph problem that appears far beyond streaming algorithms, serving as a structural primitive in graph algorithms \cite{ItaiRodeh78, AlonYusterZwick97}, fine-grained complexity \cite{RodittyVassilevskaWilliams11, VassilevskaWilliamsWilliams18}, database theory \cite{NgoPoratReRudra12}, quantum algorithms \cite{MagniezSanthaSzegedy07, LeeMagniezSantha13, LeGall14}, property testing \cite{AlonKaufmanKrivelevichRon08}, distributed computing \cite{IzumiLeGall17, ChangPettieZhang19, IzumiLeGallMagniez20}, and network analysis
\cite{MiloEtAl02, SchankWagner05, Latapy08, SuriVassilvitskii11}.
These connections make triangle finding a natural problem for understanding the power and limitations of quantum streaming algorithms.

\paragraph{Triangle counting and detection in graph streams.}
Most of the streaming literature on triangles has focused on numerical or
Boolean versions of the problem: estimating the number of triangles, or
distinguishing triangle-free graphs from graphs containing many triangles.
Bar-Yossef, Kumar, and Sivakumar initiated the use of communication reductions
for triangle counting in streams \cite{BarYossefKumarSivakumar02}.  Subsequent work refined both the algorithms and the parameterized lower bounds for this problem
\cite{BravermanOstrovskyVilenchik13, CormodeJowhari17, BeraChakrabarti17}.  A convenient
parameterization, now standard in this line of work, uses the number of triangles $T$, the
maximum number $\Delta_E$ of triangles sharing an edge, and the maximum number $\Delta_V$ of
triangles sharing a vertex.  Braverman, Ostrovsky, and Vilenchik proved an $\Omega(m\Delta_E/T)$ lower bound \cite{BravermanOstrovskyVilenchik13}.  Kallaugher and Price gave the complementary $\Omega(m\sqrt{\Delta_V}/T)$ lower bound and a hybrid sampling algorithm \cite{KallaugherPrice17}.  Jayaram and Kallaugher later obtained the optimal one-pass insertion-stream algorithm, with space $\widetilde O\left(\frac mT(\Delta_E+\sqrt{\Delta_V})\right)$, matching these lower bounds up to logarithmic factors \cite{JayaramKallaugher21}.  Related
linear-sketching and hypergraph-counting variants were studied by Kallaugher, Kapralov, and
Price \cite{KallaugherKapralovPrice18}.

\paragraph{Quantum streaming algorithms for triangle counting.}
The problem in the quantum setting is quite different.  The $\Omega(m\Delta_E/T)$ lower bound continues to hold in the quantum streaming model, because it is based on the INDEX problem, or equivalently on quantum random-access-code lower bounds
\cite{AmbainisNayakTaShmaVazirani2002,BravermanOstrovskyVilenchik13}.
In contrast, the classical $\Omega(m\sqrt{\Delta_V}/T)$ lower bound of Kallaugher and Price~\cite{KallaugherPrice17} goes through Boolean Hidden Matching, which has exponentially more efficient one-way quantum protocols \cite{BarYossefJayramKerenidis08,gavinsky2007exponential}.
The missing $\sqrt{\Delta_V}$ quantum lower bound for triangle counting is not merely a limitation of existing proof techniques: a quantum lower bound matching the classical $\Omega(m\sqrt{\Delta_V}/T)$ term cannot hold in general, since Kallaugher gave a one-pass quantum streaming algorithm for triangle
counting using
\[
\widetilde O\left(\frac{m^{8/5}\Delta_E^{4/5}}{T^{6/5}}\right)
\]
space, which is polynomially smaller than the optimal classical bound in some parameter regimes \cite{Kallaugher21}.  For example, when $\Delta_E=O(1)$ and $\Delta_V=\Omega(T)=\Omega(m)$, this gives $\widetilde O(m^{2/5})$ space, whereas the classical lower bound is $\widetilde\Omega(m^{1/2})$.  More recent work abstracts the design of such quantum streaming algorithms and gives a more modular way to derive quantum streaming upper bounds \cite{kallaugher2025design}. Together, these results show that triangle problems in quantum streams exhibit a subtly different parameter dependence from their classical counterparts.

\subsection{Open Problems}

Our results naturally suggest several directions for future work.
\begin{itemize}
    \item First, it remains open to obtain
a \emph{complete} parameterized lower bound for quantum triangle finding.  We prove an
\(\Omega(\sqrt{\Delta_V})\) one-pass lower bound in the regime
\(T=\Theta(m)\), \(\Delta_E=O(1)\), and \(1\le \Delta_V\le m^{2/3}\).
A natural goal is to understand whether the optimal one-pass quantum space complexity
matches the classical bound
\[
  \widetilde{\Theta}\!\left(\frac{m}{T}(\Delta_E+\sqrt{\Delta_V})\right)
\]
throughout all parameter regimes. 

\item 
Second, it would be interesting to understand the multi-pass quantum triangle finding.  Our reduction
is one-pass and yields a one-way quantum communication problem; a \(p\)-pass algorithm
would correspond to a \(p\)-round communication protocol, which is not covered by our
current matrix-discrepancy argument.  Classically, triangle counting admits strong
multi-pass lower bounds, for example
\[
  \Omega\!\left(\min\{m^{3/2}/T,\,m/\sqrt T\}\right)
\]
up to pass-dependent factors \cite{BeraChakrabarti17}.  This raises the question of whether similar
multi-pass quantum lower bounds hold for triangle finding or triangle counting, or
whether quantum multi-pass  streaming algorithms can achieve a space advantage in
some parameter regime.

\item 
In addition, another direction is to see if the measurement-discrepancy method can be extended
to multiparty communication models.  A natural setting is the \(k\)-party
number-in-hand version of collision finding studied by Beame and Whitmeyer
\cite{BeameWhitmeyer25}: player \(r\) receives \(x^{(r)}\in[q]^M\), and the players must
output a pair \(i\ne j\) such that
\[
  x^{(r)}_i=x^{(r)}_j
  \qquad\text{for every } r\in[k].
\]
Their work gives nearly tight randomized communication lower bounds for this
problem.  Can one prove analogous lower bounds for quantum multiparty protocols? 

\end{itemize}



\section{Preliminaries}

We write $[n]:=\{1,\ldots,n\}$ and denote by $S_n$ the symmetric group on $[n]$. All logarithms are natural unless stated otherwise. We say $A\succeq 0$ if $A$ is positive semidefinite. For Hermitian matrices $A$ and $B$ of the same dimension, we write $A\preceq B$ if $B-A$ is positive semidefinite.  The norm $\|A\|$ denotes the spectral norm, and $I$ denotes the identity operator on the underlying Hilbert space. A density matrix is a positive semidefinite operator with trace one. All Hilbert spaces appearing in our proofs are finite-dimensional. 

We shall use the following two standard estimates repeatedly. The first is the contraction principle for Rademacher sums in Banach spaces.
\begin{lemma}[{\cite[Theorem 4.4]{LedouxTalagrand1991}}]\label{lem:banach-space-contraction}
Let \(B\) be a Banach space, let \(x_a\in B\), and let \(\varepsilon_a\) be independent Rademacher signs.  If \(\alpha_a\in\mathbb R\) satisfy \(|\alpha_a|\le L\), then
\[
    \mathbb E_\varepsilon
    \left\|
        \sum_a \varepsilon_a \alpha_a x_a
    \right\|_B
    \le
    L\,
    \mathbb E_\varepsilon
    \left\|
        \sum_a \varepsilon_a x_a
    \right\|_B .
\]
\end{lemma}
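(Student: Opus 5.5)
The statement is the classical contraction principle, and I would prove it by a convexity argument on the coefficient vector. I assume throughout that the index set is finite, as it is in every application here; countable sums would follow by a limiting argument, which I do not need. Fix the vectors \(x_a\in B\) and consider the function
\[
  \Phi(\alpha) := \mathbb E_\varepsilon \Bigl\| \sum_a \varepsilon_a \alpha_a x_a \Bigr\|_B ,
  \qquad \alpha=(\alpha_a)_a \in [-L,L]^n .
\]
The goal is to show \(\Phi(\alpha)\le L\,\Phi(\mathbf 1)\), where \(\mathbf 1\) denotes the all-ones vector.

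\textbf{Step 1: \(\Phi\) is convex.} For each fixed sign pattern \(\varepsilon\), the map \(\alpha\mapsto \sum_a \varepsilon_a\alpha_a x_a\) is linear. Composing a linear map with a norm gives a convex function. The function \(\Phi\) is a finite average over the \(2^n\) sign patterns of these convex functions, so it is convex. It is also continuous.

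\textbf{Step 2: reduce to the vertices of the cube.} The cube \([-L,L]^n\) is the convex hull of its finitely many vertices \(\{\pm L\}^n\). Write an arbitrary \(\alpha\) in the cube as a convex combination of vertices. Convexity of \(\Phi\) then gives
\[
  \Phi(\alpha)\le \max_{\sigma\in\{\pm1\}^n}\Phi(L\sigma).
\]

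\textbf{Step 3: evaluate \(\Phi\) at a vertex.} Fix \(\sigma\in\{\pm1\}^n\). Homogeneity of the norm gives
\[
  \Phi(L\sigma)=L\,\mathbb E_\varepsilon\Bigl\|\sum_a (\varepsilon_a\sigma_a)x_a\Bigr\|_B .
\]
The vector \((\varepsilon_a\sigma_a)_a\) has the same distribution as \((\varepsilon_a)_a\), because independent Rademacher signs are symmetric and flipping any fixed subset of coordinates preserves the uniform distribution on \(\{\pm1\}^n\). Hence \(\Phi(L\sigma)=L\,\Phi(\mathbf 1)\) for every \(\sigma\), and combining this with Step 2 proves the lemma.

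None of these steps is a real obstacle. The only point needing care is the convexity-to-extreme-point reduction in Step 2. For that I would use the explicit decomposition \(\alpha_a=L(\lambda_a\cdot 1+(1-\lambda_a)\cdot(-1))\) with \(\lambda_a=(1+\alpha_a/L)/2\in[0,1]\). Applying convexity one coordinate at a time then avoids any appeal to general Krein--Milman-type facts.
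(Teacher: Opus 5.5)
Your proof is correct. It is essentially the standard argument behind Ledoux--Talagrand's Theorem 4.4, which the paper cites without proof: convexity of the coefficient map $\alpha\mapsto\mathbb E_\varepsilon\bigl\|\sum_a\varepsilon_a\alpha_a x_a\bigr\|_B$, reduction to the vertices of the cube, and sign-invariance of the Rademacher vector at those vertices.
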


The second estimate is the self-adjoint matrix Khintchine inequality, which controls the spectral norm of a random signed sum of fixed self-adjoint matrices.

\begin{lemma}[See~{\cite[Section 4.7.2]{Tropp2015IntroMatrixConcentration}}, see also~{\cite[Theorem 5.4.14]{Ver18}}]
\label{lem:self-adjoint-matrix-khintchine}
Let \(A_a\) be fixed self-adjoint operators on a \(d\)-dimensional Hilbert space, and let \(\varepsilon_a\) be independent Rademacher signs.  Then, for some sufficiently large constant $K$, we have:
\[
    \mathbb E_\varepsilon
    \left\|
        \sum_a \varepsilon_a A_a
    \right\|
    \le
    K\sqrt{\log(2d)}
    \left\|
        \left(\sum_a A_a^2\right)^{1/2}
    \right\|.
\]
\end{lemma}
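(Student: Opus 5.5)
We would prove the self-adjoint matrix Khintchine inequality by the matrix Laplace-transform method. Put $Y:=\sum_a \varepsilon_a A_a$ and $\sigma^2:=\bigl\|\sum_a A_a^2\bigr\|$, so that the right-hand side is $K\sqrt{\log(2d)}\,\sigma$. Since $\|Y\|=\max\{\lambda_{\max}(Y),\lambda_{\max}(-Y)\}$, for every $\theta>0$ we have $e^{\theta\|Y\|}\le \operatorname{Tr} e^{\theta Y}+\operatorname{Tr} e^{-\theta Y}$. Jensen's inequality then gives
\[
  \mathbb E\|Y\|\le \frac1\theta\log \mathbb E\bigl[\operatorname{Tr} e^{\theta Y}+\operatorname{Tr} e^{-\theta Y}\bigr].
\]
The sequence $(-\varepsilon_a)$ has the same distribution as $(\varepsilon_a)$, so the two trace expectations coincide. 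It therefore suffices to show
\[
  \mathbb E\operatorname{Tr} e^{\theta Y}\le \operatorname{Tr}\exp\Bigl(\tfrac{\theta^2}{2}\sum_a A_a^2\Bigr)\le d\,e^{\theta^2\sigma^2/2}.
\]
Granting this, we obtain $\mathbb E\|Y\|\le \frac{\log(2d)}{\theta}+\frac{\theta\sigma^2}{2}$. Choosing $\theta=\sqrt{2\log(2d)}/\sigma$ yields $\mathbb E\|Y\|\le \sqrt{2\log(2d)}\,\sigma$, so any $K\ge\sqrt2$ works. The second inequality in the display is immediate: $\sum_a A_a^2\preceq \sigma^2 I$, and $\operatorname{Tr}\exp$ is monotone with respect to $\preceq$.

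The first inequality comes in two steps. The first is a one-term estimate. For a single self-adjoint $A$, spectral calculus gives
\[
  \mathbb E e^{\theta\varepsilon A}=\cosh(\theta A)\preceq e^{\theta^2A^2/2},
\]
because the scalar inequality $\cosh t\le e^{t^2/2}$ transfers eigenvalue-wise to commuting functions of $A$. By operator monotonicity of the logarithm this upgrades to
\[
  \log \mathbb E e^{\theta\varepsilon A}\preceq \tfrac{\theta^2}{2}A^2 .
\]

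The second step, which we expect to be the main obstacle, is combining the independent summands despite noncommutativity. Here we would invoke Lieb's concavity theorem: for fixed self-adjoint $H$, the map $X\mapsto \operatorname{Tr}\exp(H+\log X)$ is concave on positive definite $X$. We condition on all signs but one, and apply Jensen's inequality to this concave map with $X=e^{\theta\varepsilon_a A_a}$. Iterating over $a$ gives the subadditivity of matrix cumulant generating functions,
\[
  \mathbb E\operatorname{Tr}\exp\Bigl(\sum_a \theta\varepsilon_aA_a\Bigr)\le \operatorname{Tr}\exp\Bigl(\sum_a\log \mathbb E e^{\theta\varepsilon_aA_a}\Bigr).
\]
Monotonicity of $\operatorname{Tr}\exp$ together with the one-term estimate then gives the claimed bound by $\operatorname{Tr}\exp\bigl(\tfrac{\theta^2}{2}\sum_aA_a^2\bigr)$.

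If we preferred to avoid Lieb's theorem, a fallback is the trace-moment route. Bound $\mathbb E\|Y\|\le(\mathbb E\operatorname{Tr} Y^{2p})^{1/(2p)}$ and expand $\operatorname{Tr} Y^{2p}$ over pairings of the signs. One then shows, via Hölder's inequality for Schatten norms, that every pairing contributes at most $\operatorname{Tr}\bigl(\sum_a A_a^2\bigr)^p$. This gives a bound of $(2p-1)!!^{1/(2p)}\,d^{1/(2p)}\,\sigma$. Taking $p\approx\log(2d)$ again produces $O\bigl(\sqrt{\log(2d)}\bigr)\sigma$. The delicate point on this route is the combinatorial comparison of crossing pairings with noncrossing ones.
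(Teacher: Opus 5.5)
Your proposal is correct and is essentially the standard argument behind the references the paper cites (the paper gives no proof of its own). That argument combines the matrix Laplace-transform bound, the Rademacher estimate $\cosh(\theta A)\preceq e^{\theta^2A^2/2}$, and subadditivity of matrix cumulant generating functions via Lieb's concavity theorem, as in Tropp's proof of the Hermitian Rademacher series bound; your two-sided trace argument also gives the explicit constant $K=\sqrt2$.
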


We use the following standard scalar Chernoff bound.
\begin{lemma}
 Let $X_1,\ldots,X_n$ be independent random variables taking values in $[0,1]$, and let $X=\sum_{i=1}^n X_i$ and $\mathbb{E}[X]=\mu$. Then, for every $\delta\in(0,1)$,
\[
    \Pr\bigl[|X-\mu|\ge \delta\mu\bigr]
    \le
    2\exp(-\delta^2\mu/3).
\]
\end{lemma}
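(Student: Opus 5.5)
This is the standard multiplicative Chernoff bound, and I would prove it by the exponential-moment (Bernstein) method. Each tail is bounded separately and the two bounds are combined by a union bound, which accounts for the factor $2$.

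\textbf{Upper tail.} Fix $t>0$. Since $X_i\in[0,1]$, convexity of $e^{tx}$ gives $e^{tx}\le 1+(e^t-1)x$ on $[0,1]$. Hence, writing $p_i=\mathbb E X_i$,
\[
\mathbb E e^{tX_i}\le 1+(e^t-1)p_i\le \exp\bigl((e^t-1)p_i\bigr).
\]
By independence, $\mathbb E e^{tX}\le \exp((e^t-1)\mu)$. Markov's inequality then gives
\[
\Pr[X\ge(1+\delta)\mu]\le \exp\bigl((e^t-1)\mu-t(1+\delta)\mu\bigr).
\]
Choosing $t=\log(1+\delta)$ yields
\[
\Pr[X\ge(1+\delta)\mu]\le \left(\frac{e^\delta}{(1+\delta)^{1+\delta}}\right)^\mu.
\]
It remains to check $\delta-(1+\delta)\log(1+\delta)\le -\delta^2/3$ for $\delta\in(0,1)$. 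Set $f(\delta)=\delta-(1+\delta)\log(1+\delta)+\delta^2/3$. Then $f(0)=0$, $f'(\delta)=-\log(1+\delta)+2\delta/3$, and $f'(0)=0$. Also $f''(\delta)=-1/(1+\delta)+2/3$, which is negative on $[0,1/2)$ and positive afterwards. So $f'$ first decreases from $0$ and then increases, and $f'(1)=-\log 2+2/3<0$. Therefore $f'\le 0$ on $[0,1]$, so $f\le 0$ there.

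\textbf{Lower tail.} Apply the same method with $-t$. This gives
\[
\mathbb E e^{-tX}\le \exp((e^{-t}-1)\mu),
\]
and so
\[
\Pr[X\le(1-\delta)\mu]\le \exp\bigl((e^{-t}-1)\mu+t(1-\delta)\mu\bigr).
\]
Choosing $t=-\log(1-\delta)$ yields the bound $\bigl(e^{-\delta}/(1-\delta)^{1-\delta}\bigr)^\mu$. To finish this tail, I would use $(1-\delta)\log(1-\delta)\ge -\delta+\delta^2/2$, which follows from the Taylor series. This gives the stronger exponent $-\delta^2\mu/2\le-\delta^2\mu/3$.

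\textbf{Conclusion and main obstacle.} Summing the two tails gives $2\exp(-\delta^2\mu/3)$. The only non-routine step is the elementary inequality $(1+\delta)\log(1+\delta)-\delta\ge\delta^2/3$ on $(0,1)$. I handle it by the sign analysis of $f'$ above, which needs only the endpoint check $f'(1)<0$. If $\mu=0$, then $X=0$ almost surely and the statement is trivial.
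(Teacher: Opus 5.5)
Your proof is correct. The convexity bound $e^{tx}\le 1+(e^t-1)x$, the choices $t=\log(1+\delta)$ and $t=-\log(1-\delta)$, the sign analysis of $f'$ (with $f'(1)=2/3-\log 2<0$), and the series identity $(1-\delta)\log(1-\delta)=-\delta+\sum_{k\ge2}\delta^k/(k(k-1))\ge-\delta+\delta^2/2$ all check out.

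The paper gives no proof of this lemma and simply quotes it as the standard scalar Chernoff bound. Your exponential-moment argument is exactly the textbook derivation behind that citation. For $\mu=0$ the event has probability $1$ while the right-hand side is $2$, so that case is indeed trivial.
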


\paragraph{Graph notations}
Let $G = (V, E)$ be a finite, simple, and undirected graph, where $V(G)$ and $E(G)$ denote the vertex set and edge set, respectively. We denote the set of all triangles in $G$ by $\operatorname{Tri}(G)$, and write the number of triangles in $G$ as
\[
    T(G) := \left| \operatorname{Tri}(G) \right|.
\]

We define the maximum edge-triangle degree of $G$ as
\[
    \Delta_E(G) := \max_{e \in E(G)} \left| \left\{ \tau \in \operatorname{Tri}(G) : e \subseteq \tau \right\} \right|.
\]
and the maximum vertex-triangle degree of $G$ as
\[
    \Delta_V(G) := \max_{v \in V(G)} \left| \left\{ \tau \in \operatorname{Tri}(G) : v \in \tau \right\} \right|.
\]

For simplicity, whenever the underlying graph $G$ is unambiguous from the context, we suppress the explicit dependence and simply write $T$, $\Delta_E$, and $\Delta_V$.

\begin{definition}[Collision finding]
\label{def:bipartite-collision-finding}
Assume that $\sqrt N$ is an integer, and there is an integer $M > N$.
Alice receives $x=(x_1,\ldots,x_M)\in [\sqrt N]^M$ and Bob receives $y=(y_1,\ldots,y_M)\in [\sqrt N]^M$. Alice sends a single quantum message to Bob. Bob must output a pair $i<j$ such that $x_i=x_j$ and $y_i=y_j$.
\end{definition}

\begin{definition}[Triangle finding]
\label{def:triangle-finding}
In the one-pass insertion-only graph-stream version of triangle finding, the input is an arbitrary-order stream of the edges of a finite simple graph \(G=([n],E)\). After processing the stream, the algorithm outputs either a triple of vertices \(\tau=\{u,v,w\}\subseteq[n]\) or a failure symbol \(\perp\). The output is valid if and only if \(\tau\in\operatorname{Tri}(G)\), equivalently all three edges \(\{u,v\},\{u,w\},\{v,w\}\) belong to \(E\). In parameterized statements, we restrict the input graph by its number of edges \(m=|E|\), number of triangles \(T(G)\), maximum edge-triangle degree \(\Delta_E(G)\), and maximum vertex-triangle degree \(\Delta_V(G)\).
\end{definition}

\subsection{Quantum Computing}

In this paper, we only consider finite-dimensional Hilbert space.
A qubit is a quantum system described by a two-dimensional complex Hilbert space $\mathbb{C}^2$, with computational basis
written in Dirac notation as
\(\ket{0}\) and \(\ket{1}\); a pure qubit state is a unit vector
\(\alpha\ket{0}+\beta\ket{1}\in\mathbb C^2\), where
\(|\alpha|^2+|\beta|^2=1\). Moreover, a \(k\)-qubit system has
state space \((\mathbb C^2)^{\otimes k}\cong \mathbb C^{2^k}\), with
computational basis \(\{\ket{x}:x\in\{0,1\}^k\}\). A pure state is a unit vector
\(\ket{\psi}\) and is identified with the rank-one density operator
\(\ket{\psi}\bra{\psi}\). More generally, a mixed state on a Hilbert space
\(\mathcal H\) is a density operator which is positive semidefinite and has trace one, and we write the set of all density operators on $\mathbb{H}$ as
\[
    \mathsf D(\mathcal H)
    :=
    \{\rho\succeq0:\operatorname{Tr}(\rho)=1\}.
\]
A quantum channel is a completely positive trace-preserving linear map between
operator spaces. Measurements in this paper are described by POVMs, recalled
next.

\begin{definition}[Positive Operator-Valued Measure]
Let $\mathcal{H}$ be a $d$-dimensional Hilbert space representing the state space of $k$ qubits, where $d = 2^k$. Let $\Omega$ be a finite set of measurement outcomes. A \emph{Positive Operator-Valued Measure (POVM)} on $\Omega$ is a collection of positive semidefinite operators $\{M_\omega\}_{\omega\in\Omega}$ acting on $\mathcal{H}$ that satisfies:
\begin{equation}
    M_\omega \succeq 0 \quad \forall \omega \in \Omega ,
\end{equation}
and the completeness relation:
\begin{equation}
    \sum_{\omega\in\Omega} M_\omega = I .
\end{equation}
\end{definition}

If a POVM \(\{M_\omega\}_{\omega\in\Omega}\) is measured on a state \(\rho\),
then the probability of outcome \(\omega\) is
\(\operatorname{Tr}(M_\omega\rho)\).

\paragraph{Quantum one-way protocols for search relations}
Let $\mathcal A$ and $\mathcal B$ be Alice's and Bob's input spaces, $\mathcal Z$ be the output space, and 
$R\subseteq \mathcal A\times \mathcal B\times \mathcal Z$ be a relation. An output $z\in\mathcal Z$ is valid on input
$(a,b)$ if and only if $(a,b,z)\in R$.
\begin{definition}[Quantum one-way communication complexity]
Let $R \subseteq \mathcal A \times \mathcal B \times \mathcal Z$
be a relation, and let $\varepsilon \in [0,1]$. We assume that Alice and Bob share no prior entanglement.

A quantum one-way protocol $\Pi$ of message dimension $d$ consists of the following parts:
\begin{enumerate}
    \item For each input $a \in \mathcal A$, Alice prepares a density operator
    \[
        \rho_a \in \mathsf D(\mathbb C^d)
    \]
    and sends it to Bob.
    \item For each input $b \in \mathcal B$, Bob performs a POVM
    \[
        M^{(b)}=\{M_z^{(b)}\}_{z\in\mathcal Z}
    \]
    on $\mathbb C^d$, where $M_z^{(b)} \succeq 0$ and $\sum_{z\in\mathcal Z} M_z^{(b)} = I_d$.
    On inputs $(a,b)$, the protocol outputs $z\in\mathcal Z$ with probability
\[
    \Pr[z\mid a,b]
    =
    \operatorname{Tr}\!\left(M_z^{(b)}\rho_a\right).
\]
\end{enumerate}
The quantum one-way communication cost of $\Pi$ is $\log_2 d$.
\end{definition}

We consider distributional one-way quantum communication protocols for search relations. Let $\mu$ be an input distribution on $\mathcal A\times\mathcal B$. The success probability of the protocol under $\mu$ is
\[
    p_{\mathrm{succ}}
    =
    \mathbb E_{(a,b)\sim\mu}
    \sum_{z\in\mathcal Z}
    \mathbf 1\{(a,b,z)\in R\}
    \operatorname{Tr}\!\left(M_z^{(b)}\rho_a\right).
\]

\paragraph{Quantum streaming algorithms}
A quantum streaming algorithm for a graph problem has sequential access to an edge stream
\[
    e_1,e_2,\ldots,e_m,
\]
where each \(e_i\) is an edge of an input graph \(G=([n],E)\). The algorithm processes the
edges from left to right using a limited quantum workspace. The edges may arrive in an
arbitrary, possibly adversarial, order.

First, the algorithm receives the number of vertices \(n\) and the number of edges \(m\).
It then initializes a quantum work register \(W\), consisting of \(s(n)\) qubits, to a
fixed state \(\rho_0\), typically
\[
    \rho_0=\ket{0^{s(n)}}\bra{0^{s(n)}}.
\]

For each possible edge \(e\in \binom{[n]}{2}\), the algorithm has a corresponding quantum
channel \(\mathcal A_e\) acting on \(W\). As the edge stream is processed, the state of
the work register evolves as
\[
    \rho_i = \mathcal A_{e_i}(\rho_{i-1}), \qquad i=1,2,\ldots,m.
\]

After the last edge is processed, the algorithm measures the work register \(W\). Based on
the measurement outcome, it outputs a value \(Z\), which may be a decision, an estimate, or
another quantity determined by the graph problem.

Throughout this framework, the space complexity of the algorithm is the number of qubits
\(s(n)\) stored in the work register between two consecutive edge arrivals. We place no
restriction on the running time or computational complexity of the quantum update channels.
We also allow the algorithm read-only access to public random bits; these random bits are
not stored in the work register and are not charged to the streaming space.

\section{Quantum Communication Lower Bound for Collision Finding}
\label{sec:collision-direct-qcc}
In this section, we prove a one-way quantum communication lower bound for the collision-finding problem. We begin by establishing the matrix discrepancy estimate and the matrix Chernoff bound that will be used in the proof.
\subsection{Matrix Discrepancy and Matrix Chernoff Bound}
\label{subsec:collision-matrix-tools}

\begin{lemma}
\label{lem:collision-categorical-khintchine}
Let $\mathcal A$ and $\mathcal C$ be finite sets. Let
$\{R_{a,c}:a\in\mathcal A,\ c\in\mathcal C\}$ be positive semidefinite
operators on a $d$-dimensional Hilbert space such that
\[
\sum_{a\in\mathcal A}\sum_{c\in\mathcal C} R_{a,c}\preceq I.
\]
For each $a\in\mathcal A$, set $\overline R_a:=\frac1{|\mathcal C|}\sum_{c\in\mathcal C}R_{a,c}$ and $
Z:=\max_{a\in\mathcal A,\ c\in\mathcal C}\|R_{a,c}\|$.
Let $u=(u_a)_{a\in\mathcal A}$, where the $u_a$'s are independent and
uniformly distributed on $\mathcal C$. Then
\[
\mathbb E_u
\left\|
\sum_{a\in\mathcal A}(R_{a,u_a}-\overline R_a)
\right\|
\le
C\sqrt{\log(2d)}\sqrt Z,
\]
where $C>0$ is a universal constant.
\end{lemma}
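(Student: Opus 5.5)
Our plan is to symmetrize, apply the self-adjoint matrix Khintchine inequality (\cref{lem:self-adjoint-matrix-khintchine}), and then use the packing constraint to bound the resulting variance proxy by \(Z\).

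\emph{Step 1 (symmetrization).} Let \(u'\) be an independent copy of \(u\). Since \(\overline R_a=\mathbb E_{u'}R_{a,u'_a}\), Jensen's inequality applied to the convex function \(\|\cdot\|\) gives
\[
\mathbb E_u\Bigl\|\sum_a (R_{a,u_a}-\overline R_a)\Bigr\|\le \mathbb E_{u,u'}\Bigl\|\sum_a (R_{a,u_a}-R_{a,u'_a})\Bigr\|.
\]
Each summand \(D_a:=R_{a,u_a}-R_{a,u'_a}\) is symmetric in distribution, and the summands are independent across \(a\). Hence we may insert independent Rademacher signs \(\varepsilon_a\) without changing the distribution. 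This bounds the left side by
\[
\mathbb E_{u,u'}\mathbb E_\varepsilon\Bigl\|\sum_a\varepsilon_a D_a\Bigr\|.
\]

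\emph{Step 2 (Khintchine).} For fixed \(u,u'\), \cref{lem:self-adjoint-matrix-khintchine} bounds the inner expectation by \(K\sqrt{\log(2d)}\,\|\sum_a D_a^2\|^{1/2}\). Since \(D_a^2\preceq 2R_{a,u_a}^2+2R_{a,u'_a}^2\), it suffices to control \(\|\sum_a R_{a,u_a}^2\|\) for an arbitrary choice of labels \(u\).

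\emph{Step 3 (packing).} For a PSD operator \(R\) we have \(R^2\preceq \|R\|\,R\), so
\[
\sum_a R_{a,u_a}^2\preceq Z\sum_a R_{a,u_a}\preceq Z\sum_{a,c}R_{a,c}\preceq Z I.
\]
Therefore \(\|\sum_a D_a^2\|\le 4Z\), and the whole expression is at most \(2K\sqrt{\log(2d)}\sqrt Z\). Since \(Z\) is deterministic, no averaging over \(u\) remains. This gives the claim with \(C=2K\).

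The main obstacle is the symmetrization step in Step 1. One has to verify that the signs can be inserted legitimately: independence of the pairs \((u_a,u'_a)\) across \(a\) lets us swap \(u_a\leftrightarrow u'_a\) coordinatewise without changing the joint law. Beyond that, the key observation is the operator inequality \(R^2\preceq\|R\|R\), which converts the packing constraint into the \(\sqrt Z\) factor.
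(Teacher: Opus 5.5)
Your proof is correct and follows essentially the same route as the paper: symmetrize with an independent copy $u'$, insert Rademacher signs, apply the matrix Khintchine inequality, and use $R^2\preceq ZR$ together with the packing constraint. The only cosmetic difference is that you apply Khintchine directly to $D_a=R_{a,u_a}-R_{a,u'_a}$ via the valid bound $(A-B)^2\preceq 2A^2+2B^2$, while the paper first uses the triangle inequality to reduce to $2\,\mathbb E\|\sum_a\varepsilon_aR_{a,u_a}\|$; both give the same constant $2K$.
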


\begin{proof}
Let $u'=(u'_a)_{a\in\mathcal A}$ be an independent copy of $u$. Since $\mathbb{E}_{u'_a}[R_{a,u'_a}]=\overline R_a$, by Jensen's inequality, we have
\[
\begin{aligned}
\mathbb E_u\left\|
\sum_{a\in\mathcal A}(R_{a,u_a}-\overline R_a)
\right\|
&=
\mathbb E_u\left\|
\mathbb E_{u'}\sum_{a\in\mathcal A}(R_{a,u_a}-R_{a,u'_a})
\right\|  
&\le \mathbb E_{u,u'}\left\|
\sum_{a\in\mathcal A}(R_{a,u_a}-R_{a,u'_a})
\right\|.
\end{aligned}
\]
Let $(\varepsilon_a)_{a\in\mathcal A}$ be independent Rademacher signs. Specifically, they are independent of $u,u'$. By symmetry,
\[
\begin{aligned}
\mathbb E_{u,u'}\left\|
\sum_{a\in\mathcal A}(R_{a,u_a}-R_{a,u'_a})
\right\|
&=
\mathbb E_{u,u',\varepsilon}\left\|
\sum_{a\in\mathcal A}
\varepsilon_a(R_{a,u_a}-R_{a,u'_a})
\right\|  
&\le
2\mathbb E_{u,\varepsilon}\left\|
\sum_{a\in\mathcal A}\varepsilon_a R_{a,u_a}
\right\|.
\end{aligned}
\]
Conditioning on $u$, \Cref{lem:self-adjoint-matrix-khintchine} applied with \(A_a=R_{a,u_a}\) gives
\[
\mathbb E_\varepsilon\left\|
\sum_{a\in\mathcal A}\varepsilon_a R_{a,u_a}
\right\|
\le
C_0\sqrt{\log(2d)}
\left\|
\left(\sum_{a\in\mathcal A}R_{a,u_a}^2\right)^{1/2}
\right\|.
\]
Since $0\preceq R_{a,u_a}\preceq ZI$, we have $R_{a,u_a}^2\preceq ZR_{a,u_a}$. Moreover, $\sum_{a\in\mathcal A}R_{a,u_a}
\preceq \sum_{a\in\mathcal A}\sum_{c\in\mathcal C}R_{a,c}
\preceq I$.
Therefore
\[
\sum_{a\in\mathcal A}R_{a,u_a}^2\preceq ZI, \text{ and }
\left\|
\left(\sum_{a\in\mathcal A}R_{a,u_a}^2\right)^{1/2}
\right\|
\le \sqrt Z.
\]
Combining the estimates and absorbing the numerical factor into $C$
proves the claim.
\end{proof}

To analyze the quantum protocol, we use a matrix Chernoff bound.
\begin{lemma}
\label{lem:collision-color-row-sampling}
Let $\mathcal A,\mathcal B,\mathcal C$ be finite sets. For every $a\in\mathcal A$ and $b\in\mathcal B$, let $P_{a,b}\succeq0$ be an operator on a $d$-dimensional Hilbert space. Suppose
\[
P_{a,b}\preceq \frac1{|\mathcal C|}I
\qquad\text{for all }a,b, \text{ and } \sum_{a\in\mathcal A}\sum_{b\in\mathcal B}P_{a,b}\preceq I.
\]
Let $v=(v_b)_{b\in\mathcal B}$, where $v_b$'s are independent and uniformly distributed on $\mathcal C$. Define $R_{a,c}(v):=\sum_{b\in\mathcal B:\,v_b=c}P_{a,b}$ for $a\in\mathcal A,\ c\in\mathcal C$. Then
\[
\mathbb E_v
\max_{a\in\mathcal A,\ c\in\mathcal C}
\|R_{a,c}(v)\|
\le
C\frac{\log(2d)+\log(2|\mathcal C|)}{|\mathcal C|},
\]
where $C>0$ is a universal constant.
\end{lemma}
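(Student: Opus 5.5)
Fix $a\in\mathcal A$ and $c\in\mathcal C$. Then $R_{a,c}(v)=\sum_{b\in\mathcal B}\mathbf 1[v_b=c]\,P_{a,b}$ is a sum of independent random PSD matrices. Each summand satisfies $0\preceq \mathbf 1[v_b=c]P_{a,b}\preceq L\,I$ with $L:=1/|\mathcal C|$. Writing $S_a:=\sum_{b}P_{a,b}$, the mean is $\mathbb E_v R_{a,c}(v)=S_a/|\mathcal C|$, so the relevant parameter is $\mu_a:=\lambda_{\max}(\mathbb E R_{a,c})=\|S_a\|/|\mathcal C|$. The packing hypothesis gives $\sum_a S_a\preceq I$, hence $\|S_a\|\le 1$ and $\mu_a\le L$.

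\textbf{Step 1 (per-pair tail).} I will apply the upper-tail matrix Chernoff bound (Tropp, Thm.~5.1.1):
\[
\Pr\bigl[\lambda_{\max}(R_{a,c})\ge(1+\delta)\mu_a\bigr]\le d\Bigl(\tfrac{e^\delta}{(1+\delta)^{1+\delta}}\Bigr)^{\mu_a/L}.
\]
Take $(1+\delta)\mu_a=tL$ with $t\ge 1$ and drop the factor $e^{-\mu_a/L}\le 1$. This gives
\[
\Pr\bigl[\|R_{a,c}\|\ge tL\bigr]\le d\Bigl(\frac{e\mu_a}{tL}\Bigr)^{t}=d\Bigl(\frac{e\|S_a\|}{t}\Bigr)^{t}.
\]

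\textbf{Step 2 (union bound over $a$).} This is the main obstacle, since $|\mathcal A|$ is unbounded and a naive union bound over all $a$ fails. The fix is that the tail for $a$ carries the factor $\|S_a\|^t\le\|S_a\|\le \operatorname{Tr}S_a$ for $t\ge1$, and $\sum_a\operatorname{Tr}S_a\le\operatorname{Tr}I=d$. Summing over $a\in\mathcal A$ and $c\in\mathcal C$ then gives
\[
\Pr\Bigl[\max_{a,c}\|R_{a,c}\|\ge tL\Bigr]\le |\mathcal C|\,d^2\,(e/t)^{t}\qquad(t\ge 1).
\]
(If one prefers to avoid $\|S_a\|^t\le\|S_a\|$, one can group the $a$'s dyadically by $\|S_a\|$; the trace bound controls how many $a$'s fall in each group.)

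\textbf{Step 3 (integrate the tail).} Set $t_0:=e^2\bigl(\log(2d)+\log(2|\mathcal C|)\bigr)\ge e^2$. For $t\ge t_0$ we have $(e/t)^t\le e^{-t}$, so the failure probability is at most $|\mathcal C| d^2 e^{-t}$. Hence
\[
\mathbb E\max_{a,c}\|R_{a,c}\|\le L\Bigl(t_0+\int_{t_0}^\infty |\mathcal C|\,d^2e^{-t}\,dt\Bigr)\le L\bigl(t_0+O(1)\bigr),
\]
where the integral is $O(1)$ because $|\mathcal C|d^2e^{-t_0}\le 1$ by the choice of $t_0$. This yields
\[
\mathbb E_v\max_{a,c}\|R_{a,c}(v)\|\le C\,\frac{\log(2d)+\log(2|\mathcal C|)}{|\mathcal C|},
\]
which is the claim.
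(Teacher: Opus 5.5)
Your proof is correct. It differs from the paper's in how it handles the union bound over the unboundedly many rows $a\in\mathcal A$.

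The paper uses a hard threshold. It sets $\tau=(\log(2d)+\log(2|\mathcal C|))/|\mathcal C|$ and bounds the rows with $\|S_a\|\le\tau$ deterministically, via $R_{a,c}(v)\preceq S_a$ (the paper writes $A_a$ for your $S_a$). The trace bound $\sum_a\operatorname{Tr}S_a\le d$ then shows there are at most $d/\tau$ remaining ``big'' rows. For those, the paper applies matrix Chernoff only in the crude form that uses $\mu\le 1/|\mathcal C|$, giving a tail $e^{-u}$ at level of order $(u+\log(2d))/|\mathcal C|$, and union-bounds over at most $d|\mathcal C|^2$ pairs.

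You instead keep the $\mu_a$-dependence in Tropp's bound, so each row's tail carries the factor $\|S_a\|^t\le\operatorname{Tr}S_a$. The trace bound then enters as a weighted union bound over all of $\mathcal A$, with no case split. Your dyadic-grouping aside is essentially the paper's thresholding.

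Your route is more self-contained and gives an explicit constant ($C\le e^2+1$ suffices) from a single tail computation. The paper's route is more modular: it needs matrix Chernoff only as a black box with an upper bound on the mean. One trivial point: rows with $S_a=0$ must be set aside before you choose $\delta$ via $(1+\delta)\mu_a=tL$, but they contribute nothing.
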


\begin{proof}
For each $a\in\mathcal A$ define the row sum $A_a:=\sum_{b\in\mathcal B}P_{a,b}$. Then $A_a\succeq0$, $A_a\preceq I$, and
\[
    \sum_{a\in\mathcal A}A_a
    =
    \sum_{a\in\mathcal A}\sum_{b\in\mathcal B}P_{a,b}
    \preceq I.
\]
In particular, $\sum_{a\in\mathcal A}\operatorname{Tr}(A_a)\le d$. Let $L:=\log(2d)+\log(2|\mathcal C|)$ and $\tau:=\frac{L}{|\mathcal C|}$. For every row with $\|A_a\|\le\tau$, we have
\[
    \max_{c\in\mathcal C}\|R_{a,c}(v)\|
    \le \|A_a\|
    \le \tau
\]
deterministically.  It remains to control the rows with $\|A_a\|>\tau$.
Let $\mathcal A_{\mathrm{big}}:= \{a\in\mathcal A:\|A_a\|>\tau\}$. Since $A_a\succeq0$, $\operatorname{Tr}(A_a)\ge \|A_a\|$, and hence
\[
    |\mathcal A_{\mathrm{big}}|\,\tau
    <
    \sum_{a\in\mathcal A_{\mathrm{big}}}\operatorname{Tr}(A_a)
    \le d.
\]
Thus, $|\mathcal A_{\mathrm{big}}|\le d/\tau$.  If
$\mathcal A_{\mathrm{big}}$ is empty, the maximum over big rows is zero, and only the deterministic small-row bound remains.  Hence, we assume below that $\mathcal A_{\mathrm{big}}$ is nonempty.

For fixed $a\in\mathcal A$ and $c\in\mathcal C$, the operators $\mathbf 1_{\{v_b=c\}}P_{a,b}$ indexed by $b\in\mathcal B$ are independent and positive semidefinite. Since $\sum_{b\in\mathcal B}P_{a,b}\preceq I$, we have 
\[
\mathbb E_v[R_{a,c}(v)]
=
\frac1{|\mathcal C|}\sum_{b\in\mathcal B}P_{a,b}
\preceq
\frac1{|\mathcal C|}I.
\]
Moreover, $0\preceq \mathbf 1_{\{v_b=c\}}P_{a,b} \preceq \frac1{|\mathcal C|}I$. By the matrix Chernoff bound, for every $u\ge0$,
\[
\Pr\left[
\|R_{a,c}(v)\|
\ge
C_0\frac{u+\log(2d)+1}{|\mathcal C|}
\right]
\le e^{-u},
\]
where $C_0>0$ is a universal constant. Taking a union bound over the $|\mathcal A_{\mathrm{big}}|\,|\mathcal C|$ choices of $(a,c)$ with $a\in\mathcal A_{\mathrm{big}}$ gives
\[
\Pr\left[
\max_{a\in\mathcal A_{\mathrm{big}},\ c\in\mathcal C}\|R_{a,c}(v)\|
\ge
C_0\frac{
u+\log(2d)+\log(|\mathcal A_{\mathrm{big}}|\,|\mathcal C|)+1
}{|\mathcal C|}
\right]
\le e^{-u}.
\]
Since $\tau=L/|\mathcal C|$ and $L\ge1$,
\[
    \log(|\mathcal A_{\mathrm{big}}|\,|\mathcal C|)
    \le
    \log(d|\mathcal C|/\tau)
    =
    \log d+\log|\mathcal C|+\log(|\mathcal C|/L)
    \le
    \log d+2\log|\mathcal C| .
\]
Integrating this tail bound over $u\ge0$ yields
\[
\mathbb E_v
\max_{a\in\mathcal A_{\mathrm{big}},\ c\in\mathcal C}
\|R_{a,c}(v)\|
\le
C\cdot\frac{\log(2d)+\log(2|\mathcal C|)}{|\mathcal C|},
\]
after absorbing the additive constant into the universal constant $C$. Combining the deterministic bound on the small rows with the last inequality and absorbing $\tau=L/|\mathcal C|$ into the constant proves the claim.
\end{proof}

\subsection{Proof of Theorem \ref{thm:introduction-collision-informal}}
\label{subsec:collision-lower-bound-proof}
\begin{proof}[Proof of Theorem~\ref{thm:introduction-collision-informal}]
Let $\mathcal I=[M]$ and $\mathcal C=[\sqrt N]$. Alice and Bob receive independent uniform strings $x,y\in\mathcal C^{\mathcal I}$, and Bob must output a pair $i<j$ such that $x_i=x_j$ and $y_i=y_j$.

Fix a $k$-qubit one-way quantum protocol and put $d=2^k$. On Alice's input $x$, Alice sends a density matrix $\rho_x$ on $\mathbb C^d$. On Bob's input $y$, Bob performs a POVM
\[
\mathcal M^{(y)} = \left\{M_{\{i,j\}}^{(y)}:\{i,j\}\in{\textstyle\binom{\mathcal I}{2}}\right\}
\]
where $M_{\{i,j\}}^{(y)}\succeq0$ and $\sum_{\{i,j\}\in\binom{\mathcal I}{2}}M_{\{i,j\}}^{(y)}= I$. For fixed Alice input $x$, define the averaged success operator
\[
H_x
:=
\mathbb E_y
\sum_{\{i,j\}\in\binom{\mathcal I}{2}}
\mathbf 1_{\{x_i=x_j\}}\mathbf 1_{\{y_i=y_j\}}
M_{\{i,j\}}^{(y)} .
\]
The success probability conditioned on $x$ is $\operatorname{Tr}(\rho_xH_x)$. Since $\rho_x$ is a density matrix, $\operatorname{Tr}(\rho_xH_x)\le \|H_x\|$. Therefore
\begin{equation} \label{eq:collision-success-H}
p_{\mathrm{succ}}\le \mathbb E_x\|H_x\|.
\end{equation}

For each output pair $ \{i,j\}\in\binom{\mathcal I}{2}$, define $P_{\{i,j\}}:=\mathbb E_y\left[
\mathbf 1_{\{y_i=y_j\}}M_{\{i,j\}}^{(y)}
\right]$. These operators are positive semidefinite. Moreover, because each POVM element satisfies $M_{\{i,j\}}^{(y)}\preceq I$ and $\Pr_y[y_i=y_j]=1/|\mathcal C|$, we have
\[
P_{\{i,j\}}\preceq \frac1{|\mathcal C|}I.
\]
The POVM normalization also gives
\[
\sum_{\{i,j\}\in\binom{\mathcal I}{2}}P_{\{i,j\}}
=\mathbb E_y
\sum_{\{i,j\}\in\binom{\mathcal I}{2}}
\mathbf 1_{\{y_i=y_j\}}M_{\{i,j\}}^{(y)}
\preceq
\mathbb E_y
\sum_{\{i,j\}\in\binom{\mathcal I}{2}}M_{\{i,j\}}^{(y)}
\preceq I.
\]
Thus, the protocol induces a PSD packing
\[
P_{\{i,j\}}\succeq0,
\qquad
P_{\{i,j\}}\preceq \frac1{|\mathcal C|}I,
\qquad
\sum_{\{i,j\}\in\binom{\mathcal I}{2}}P_{\{i,j\}}\preceq I.
\]
The success operator becomes
\[
H_x
=
\sum_{\{i,j\}\in\binom{\mathcal I}{2}}
\mathbf 1_{\{x_i=x_j\}}P_{\{i,j\}}.
\]

For every pair $\{i,j\}$, we have
\[
\mathbf 1_{\{x_i=x_j\}}
=
\frac1{|\mathcal C|}
+
\left(\mathbf 1_{\{x_i=x_j\}}-\frac1{|\mathcal C|}\right).
\]
The constant part contributes an operator bounded by
$\frac1{|\mathcal C|} \sum_{\{i,j\}\in\binom{\mathcal I}{2}}P_{\{i,j\}} \preceq \frac1{|\mathcal C|}I$.

Hence, by \eqref{eq:collision-success-H},
\begin{equation}
\label{eq:collision-success-G}
p_{\mathrm{succ}}
\le
\frac1{|\mathcal C|}
+
\mathbb E_x\|G_x\|,
\end{equation}
where the centered discrepancy operator is $G_x:= \sum_{\{i,j\}\in\binom{\mathcal I}{2}}\left(\mathbf 1_{\{x_i=x_j\}}-\frac1{|\mathcal C|}\right)
P_{\{i,j\}}$.

We first decouple the pairs by a random bipartition. Let $(\delta_i)_{i\in\mathcal I}$ be independent uniform bits, independent of $x$, and define
\[
\operatorname{Cross}_\delta(i,j):=\mathbf 1_{\{\delta_i\ne\delta_j\}}.
\]
Since $\mathbb E_\delta\operatorname{Cross}_\delta(i,j)=1/2$,
\[
G_x
=
2\,\mathbb E_\delta
\sum_{\{i,j\}\in\binom{\mathcal I}{2}}
\operatorname{Cross}_\delta(i,j)
\left(\mathbf 1_{\{x_i=x_j\}}-\frac1{|\mathcal C|}\right)
P_{\{i,j\}}.
\]
By Jensen's inequality, we have
\begin{equation}
\label{eq:collision-decoupling}
\mathbb E_x\|G_x\|
\le
2\mathbb E_{\delta,x}
\left\|
\sum_{\{i,j\}\in\binom{\mathcal I}{2}}
\operatorname{Cross}_\delta(i,j)
\left(\mathbf 1_{\{x_i=x_j\}}-\frac1{|\mathcal C|}\right)
P_{\{i,j\}}
\right\|.
\end{equation}

Fix the partition
\[
\mathcal L:=\{i\in\mathcal I:\delta_i=0\},
\qquad
\mathcal R:=\{i\in\mathcal I:\delta_i=1\}.
\]
For $a\in\mathcal L$ and $b\in\mathcal R$, write
$P_{a,b}:=P_{\{a,b\}}$. Then
\[
P_{a,b}\preceq \frac1{|\mathcal C|}I,
\qquad
\sum_{a\in\mathcal L}\sum_{b\in\mathcal R}P_{a,b}\preceq I.
\]
The cross contribution is
\[
S_{\mathcal L,\mathcal R}(x)
:=
\sum_{a\in\mathcal L}\sum_{b\in\mathcal R}
\left(\mathbf 1_{\{x_a=x_b\}}-\frac1{|\mathcal C|}\right)P_{a,b}.
\]
It remains to bound this quantity uniformly over the fixed partition. The following argument will condition on the $x_b$'s for $b \in \mathcal{R}$. To explicitly emphasize the difference of left and right, we write $u_a:=x_a$ for $a\in\mathcal L$ and $v_b:=x_b$ for $b\in\mathcal R$. For $a\in\mathcal L$ and $c\in\mathcal C$, define
\[
B_{a,c}(v):=\sum_{b\in\mathcal R:\,v_b=c}P_{a,b},
\qquad
\overline B_a(v):=\frac1{|\mathcal C|}\sum_{c\in\mathcal C}B_{a,c}(v).
\]
Equivalently,
\[
\overline B_a(v)=\frac1{|\mathcal C|}\sum_{b\in\mathcal R}P_{a,b}.
\]
 Conditioned on $v$, the variables $(u_a)_{a\in\mathcal L}$ are independent and uniformly distributed on $\mathcal C$, and
\[
S_{\mathcal L,\mathcal R}(x)
=
\sum_{a\in\mathcal L}\bigl(B_{a,u_a}(v)-\overline B_a(v)\bigr).
\]
Moreover,
\[
\sum_{a\in\mathcal L}\sum_{c\in\mathcal C}B_{a,c}(v)
=
\sum_{a\in\mathcal L}\sum_{b\in\mathcal R}P_{a,b}
\preceq I.
\]
Therefore Lemma~\ref{lem:collision-categorical-khintchine}, applied to the family $\{B_{a,c}(v):a\in\mathcal L,\ c\in\mathcal C\}$, gives
\[
\mathbb E_u\bigl[
\|S_{\mathcal L,\mathcal R}(u,v)\|
\mid v
\bigr]
\le
C\sqrt{\log(2d)}\sqrt{Z_v},
\]
where $Z_v:=\max_{a\in\mathcal L,\ c\in\mathcal C}\|B_{a,c}(v)\|$.
Taking expectation in $v$ and using Jensen's inequality for the square root,
\[
\mathbb E_{u,v}\|S_{\mathcal L,\mathcal R}(u,v)\|
\le
C\sqrt{\log(2d)}\sqrt{\mathbb E_v Z_v}.
\]

If $\mathcal L=\emptyset$ or $\mathcal R=\emptyset$, then the last expectation is zero. Otherwise, Lemma~\ref{lem:collision-color-row-sampling} applies with $\mathcal A=\mathcal L$,$\mathcal B=\mathcal R$
and color set$\mathcal C$ yields
\[
\mathbb E_v Z_v
\le
C\cdot\frac{\log(2d)+\log(2|\mathcal C|)}{|\mathcal C|}.
\]
Consequently, uniformly over the fixed partition,
\begin{equation}
\label{eq:collision-cross-bound}
\mathbb E_{u,v}\|S_{\mathcal L,\mathcal R}(u,v)\|
\le
C
\sqrt{
\frac{\log(2d)\bigl(\log(2d)+\log(2|\mathcal C|)\bigr)}
{|\mathcal C|}
}.
\end{equation}

Averaging \eqref{eq:collision-cross-bound} over $\delta$ and using \eqref{eq:collision-decoupling}, absorbing the factor $2$ into the
constant, gives
\[
\mathbb E_x\|G_x\|
\le
C
\sqrt{
\frac{\log(2d)\bigl(\log(2d)+\log(2|\mathcal C|)\bigr)}
{|\mathcal C|}
}.
\]
Combining with \eqref{eq:collision-success-G}, and using $|\mathcal C|=\sqrt N$ and $d=2^k$, we get
\begin{equation}
\label{eq:collision-final-success-bound}
p_{\mathrm{succ}}
\le
\frac1{\sqrt N}
+
C
\sqrt{
\frac{(k+1)\bigl(k+1+\log N\bigr)}
{\sqrt N}
}.
\end{equation}

Finally assume $M>N$ and $p_{\mathrm{succ}}\ge\delta$ for a fixed
constant $\delta>0$. Since $1/\sqrt N=o(1)$,
\eqref{eq:collision-final-success-bound} implies
\[
(k+1)\bigl(k+1+C\log N\bigr)=\Omega_\delta(\sqrt N).
\]
Equivalently, there is a constant $c_0=c_0(\delta)>0$ such that, for all
sufficiently large $N$,
\[
(k+1)\bigl(k+1+C\log N\bigr)\ge c_0\sqrt N.
\]
Choose $\alpha>0$ so that $2\alpha^2<c_0$. Since
$\log N=o(N^{1/4})$, for all sufficiently large $N$ we have $C\log N\le \alpha N^{1/4}$. If $k+1\le\alpha N^{1/4}$, then
\[
(k+1)\bigl(k+1+C\log N\bigr)
\le
2\alpha^2\sqrt N
<
c_0\sqrt N,
\]
contradicting the preceding lower bound. Hence,
$k=\Omega_\delta(N^{1/4})$. This completes the proof.
\end{proof}

\section{Quantum Streaming Lower Bound for Triangle Finding}
\label{sec:triangle-finding-deltaV-lb}
In this section, we prove the quantum streaming lower bound for the triangle-finding problem. We begin by establishing the hard input distribution and the communication game for reduction.

\subsection{Hard Input Distribution}

\begin{definition}\label{def:tri-distrib}
Let $r,s\in \mathbb{N}$. We define a random graph $ G = (V, E)$ as follows. Let $V=X\cup Y\cup Z$ be a partition of the vertex set with $|X|=r$ and $|Y|=s$.  The set $Z$ is partitioned into $s$ blocks $Z=Z_1\cup\cdots\cup Z_s$ and $|Z_i|=r$. 
\begin{itemize}
    \item Sample uniformly random bijections $F_i:X\to Z_i$ for all $i\in[s]$, and define
    \[
        E_{XZ}=\bigl\{(x,F_i(x)):x\in X,\ i\in[s]\bigr\};
    \]

    \item Sample a uniformly random function $A:X\times Y\to\{0,1\}$, and define
    \[
        E_{XY}=\bigl\{(x,y):A(x,y)=1\bigr\};
    \]

    \item Sample a uniformly random bijection $C:[s]\to Y$, and define
    \[
        E_{YZ}=\bigl\{(C(i),z):i\in[s],\ z\in Z_i\bigr\}.
    \]
\end{itemize}
Finally, set $E=E_{XZ}\cup E_{XY}\cup E_{YZ}$. Check \Cref{fig:triptri-hard-distribution} for an illustration.
\end{definition}

\begin{figure}[t]
\centering
\begin{tikzpicture}[
    x=1cm,y=1cm,
    every node/.style={font=\small},
    vertex/.style={circle, draw=black, fill=white, minimum size=15pt, inner sep=0pt},
    xvertex/.style={vertex, fill=blue!8},
    yvertex/.style={vertex, fill=green!8},
    zvertex/.style={vertex, fill=orange!10},
    chosen/.style={draw=red!75!black, very thick, fill=red!8},
    blockbox/.style={draw=gray!55, rounded corners=2pt, fill=gray!7},
    xyedge/.style={draw=green!45!black, line width=.55pt, shorten <=2pt, shorten >=2pt},
    xzedge/.style={draw=blue!55!black, line width=.55pt, shorten <=2pt, shorten >=2pt},
    yzedge/.style={draw=orange!75!black, line width=.55pt, shorten <=2pt, shorten >=2pt},
    highedge/.style={draw=red!75!black, line width=1.15pt, shorten <=2pt, shorten >=2pt},
    note/.style={font=\scriptsize, align=center}
]

\node[font=\small\bfseries] at (5.2,0.85) {$X$};
\node[font=\small\bfseries] at (0.0,0.2) {$Y$};
\node[font=\small\bfseries] at (11.2,0.2) {$Z = Z_1 \cup \cdots \cup Z_s$};

\node[xvertex]        (x1)    at (2.2,0) {$x_1$};
\node[xvertex,chosen] (xstar) at (5.2,0) {$x$};
\node                 at (7.1,0) {$\cdots$};
\node[xvertex]        (xr)    at (8.8,0) {$x_r$};

\node[yvertex]        (y1)    at (0,-1.3) {$C(1)$};
\node[yvertex,chosen] (ystar) at (0,-4.1) {$C(i)$};
\node                 at (0,-5.8) {$\vdots$};
\node[yvertex]        (ys)    at (0,-7.4) {$C(s)$};
\node[note] at (0,-8.35) {Bob's bijection\\$C:[s]\to Y$};

\draw[blockbox] (9.9,-2.3) rectangle (11.8,-0.3);
\node[note, anchor=west] at (12.0,-1.3) {$Z_1$};
\node[zvertex] (z11) at (10.85,-0.75) {$z_{1,1}$};
\node          at (10.85,-1.30) {$\vdots$};
\node[zvertex] (z1r) at (10.85,-1.85) {$z_{1,r}$};

\draw[blockbox, draw=red!60!black, fill=red!3] (9.9,-5.1) rectangle (11.8,-3.1);
\node[note, anchor=west, text=red!70!black] at (12.0,-4.1) {$Z_i$};
\node[zvertex]        (zi1)   at (10.85,-3.45) {$z_{i,1}$};
\node[zvertex,chosen] (zstar) at (10.85,-4.10) {$z$};
\node[zvertex]        (zir)   at (10.85,-4.75) {$z_{i,r}$};
\node[note, anchor=west, text=red!70!black] at (11.95,-4.65) {$z = F_i(x)$};

\draw[blockbox] (9.9,-7.9) rectangle (11.8,-5.9);
\node[note, anchor=west] at (12.0,-6.9) {$Z_s$};
\node[zvertex] (zs1) at (10.85,-6.35) {$z_{s,1}$};
\node          at (10.85,-6.90) {$\vdots$};
\node[zvertex] (zsr) at (10.85,-7.45) {$z_{s,r}$};

\draw[xyedge] (y1) -- (xr);
\draw[xyedge] (ys) -- (x1);
\draw[highedge] (ystar) -- node[above, note, sloped] {$A_{x,C(i)}=1$} (xstar);

\draw[xzedge] (x1) -- (zi1);
\draw[xzedge] (xr) -- (zir);
\draw[highedge] (xstar) -- (zstar);

\draw[yzedge] (y1) -- (z11);
\draw[yzedge] (y1) -- (z1r);

\draw[yzedge] (ystar) -- (zi1);
\draw[yzedge] (ystar) -- (zir);
\draw[highedge] (ystar) -- (zstar);

\draw[yzedge] (ys) -- (zs1);
\draw[yzedge] (ys) -- (zsr);

\draw[xyedge] (0,-9.3) -- +(0.55,0);
\node[note, anchor=west] at (0.65,-9.3) {$XY$ edge from $A$};

\draw[xzedge] (4,-9.3) -- +(0.55,0);
\node[note, anchor=west] at (4.65,-9.3) {$XZ$ matching edge from $F_i$};

\draw[yzedge] (9,-9.3) -- +(0.55,0);
\node[note, anchor=west] at (9.65,-9.3) {$YZ$ block edge from $C$};

\draw[highedge] (0,-9.8) -- +(0.55,0);
\node[note, anchor=west, text=red!75!black] at (0.65,-9.8)
{highlighted valid triangle $\{x,C(i),F_i(x)\}$};

\end{tikzpicture}
\caption{Visualization of the hard distribution in \Cref{def:tri-distrib}.
The set $Z$ is partitioned into blocks $Z_i$ of size $r$.
Each $F_i$ is a bijection from $X$ to $Z_i$, and Bob's bijection $C$
makes the vertex $C(i)$ adjacent to the whole block $Z_i$.
A valid output is a triangle $\{x,C(i),F_i(x)\}$ for which the corresponding
$XY$ edge is present, i.e. $A_{x,C(i)}=1$.}
\label{fig:triptri-hard-distribution}
\end{figure}

Based on the hard input distribution, we can define the communication game about triangle finding.
\begin{definition}[The communication problem \(\mathsf{TripTri}(r,s)\)]
The above distribution also defines the communication problem $\mathsf{TripTri}(r,s)$: Alice is given $(F_1,\ldots,F_s,A)$, Bob is given $C$, Alice sends one quantum message, and Bob must output $(x,i,z)$ with
\[
    z=F_i(x)
    \qquad\text{and}\qquad
    A_{x,C(i)}=1.
\]
\end{definition}

The lower bound will be proved in the regime $\sqrt{s}\le r\le s$, but it is useful
to compare the distribution with two degenerate extremities for better understanding. 
Recall that in $\mathsf{TripTri}(r,s)$ Alice holds $(F_1,\ldots,F_s,A)$, Bob holds $C$, and
a valid witness is a triple $(x,i,z)$ satisfying
$z=F_i(x) \qquad\text{and}\qquad
A_{x,C(i)}=1$. When $r=1$, write $X=\{x_\star\}$ and $Z_i=\{z_i\}$ for each $i\in[s]$.
Then the maps $F_i$ are fixed, and Alice's nontrivial input is just the
string
\[
a\in\{0,1\}^{Y},
\qquad
a_y:=A_{x_\star,y}.
\]
Bob's bijection $C:[s]\to Y$ is equivalently a perfect matching between the block labels $[s]$ and the coordinates $Y$.  The accepting relation becomes
\[
(x_\star,i,z_i)\text{ is accepted}
\qquad\Longleftrightarrow\qquad
a_{C(i)}=1 .
\]
Thus, this corresponds to a Boolean Hidden Matching type instance: Bob's input hides which coordinate of Alice's string is associated with each block label.  This is not suitable for proving our quantum lower bound, since Boolean Hidden Matching admits efficient one-way quantum protocols \cite{BarYossefJayramKerenidis08}.

At the other extreme, when $s=1$, write $Y=\{y_\star\}$ and $Z=Z_1$. The bijection $C$ is fixed, and the witness condition becomes $z=F_1(x)$ and $A_{x,y_\star}=1$. If
\(
S:=\{x\in X:A_{x,y_\star}=1\}\) and 
\(M:=\{(x,F_1(x)):x\in X\}\subseteq X\times Z,
\)
Then the valid witnesses are exactly the elements of
$M\cap (S\times Z)$. Since $S$ has density $1/2$ with high probability, this is a set-intersection
search problem with a constant fraction of intersections, which is easy even classically.

Our hard distribution lies between these two easy extremities.  The bijection $C$ keeps the hidden-matching feature: for each block $i$, Bob hides the
relevant column $C(i)$ of Alice's matrix $A$.  At the same time, each block contains $r$ possible witnesses, and the condition $A_{x,C(i)}=1$ adds a
dense set-intersection layer inside the hidden block.  Thus, the construction can be viewed as an asymmetric form of Boolean Hidden Matching, strengthened by a set-intersection component.  The lower bound comes from this intermediate regime, where neither the efficient quantum protocol for Boolean Hidden Matching nor the classical easiness of dense set intersection applies.

\begin{proposition}
\label{prop:tripartite-basic-properties}
Assume $\sqrt{s}\le r\le s$.  With probability $1-o(1)$,
\[
    |E(G)|=\Theta(rs),\qquad T(G)=\Theta(rs),\qquad \Delta_V(G)=\Theta(s).
\]
Moreover, every graph in the support satisfies $\Delta_E(G)\le 1$.
\end{proposition}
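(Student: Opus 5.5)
The plan is to first pin down the structure of the support deterministically, and only then bring in the randomness of $A$. The key observation is that there are no edges inside $X$, inside $Y$ or inside $Z$. Hence every triangle of $G$ uses exactly one vertex from each of $X$, $Y$ and $Z$.

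\emph{Step 1: characterize the triangles.} Fix a triangle $\{x,y,z\}$ with $z\in Z_i$.
\begin{itemize}
\item The only $Y$-neighbor of $z$ is $C(i)$, so $y=C(i)$.
\item The only $X$-neighbor of $z$ is $F_i^{-1}(z)$, because only $F_i$ maps into $Z_i$ and it is a bijection. So $z=F_i(x)$.
\item Finally, the edge $xy$ must be present, i.e.\ $A_{x,C(i)}=1$.
\end{itemize}
Conversely, every pair $(x,i)$ with $A_{x,C(i)}=1$ gives the triangle $\{x,C(i),F_i(x)\}$. Hence triangles are in bijection with pairs $(x,i)\in X\times[s]$ satisfying $A_{x,C(i)}=1$. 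Since $C$ is a bijection, this means
\[
T(G)=|\{(x,y):A_{x,y}=1\}|.
\]

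\emph{Step 2: the deterministic bound $\Delta_E\le1$.} I will check each edge type using Step 1.
\begin{itemize}
\item An $XZ$ edge $(x,F_i(x))$ determines $i$, and then forces $y=C(i)$.
\item A $YZ$ edge $(C(i),z)$ with $z\in Z_i$ forces $x=F_i^{-1}(z)$.
\item An $XY$ edge $(x,y)$ forces $i=C^{-1}(y)$ and then $z=F_i(x)$.
\end{itemize}
So each edge lies in at most one triangle.

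The same counting bounds the vertex-triangle degrees.
\begin{itemize}
\item A vertex $z$ lies in at most one triangle.
\item A vertex $y=C(i)$ lies in $|\{x:A_{x,y}=1\}|\le r\le s$ triangles.
\item A vertex $x$ lies in $|\{y:A_{x,y}=1\}|\le s$ triangles.
\end{itemize}
Therefore $\Delta_V\le s$ deterministically.

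\emph{Step 3: concentration.} The edge count is
\[
|E|=|E_{XZ}|+|E_{YZ}|+|E_{XY}|=rs+rs+\sum_{x,y}A_{x,y}.
\]
Also $T(G)=\sum_{x,y}A_{x,y}$, which is $\mathrm{Bin}(rs,1/2)$. The scalar Chernoff bound with $\delta=1/2$ gives $\sum A_{x,y}\in[rs/4,3rs/4]$ with probability $1-2e^{-rs/12}=1-o(1)$. This yields both $|E|=\Theta(rs)$ and $T=\Theta(rs)$.

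For the lower bound on $\Delta_V$, fix any single $x_0\in X$. Its row weight is $\mathrm{Bin}(s,1/2)$, so Chernoff gives row weight $\ge s/4$ with probability $1-2e^{-s/24}=1-o(1)$. Hence $\Delta_V\ge s/4$, which together with Step 2 gives $\Delta_V=\Theta(s)$.

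The hypothesis $\sqrt s\le r\le s$ enters in two places. The inequality $r\le s$ ensures that the $Y$-vertex degrees do not exceed $s$. The asymptotic $s\to\infty$ (and hence $rs\to\infty$) makes the failure probabilities $o(1)$.

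There is no real obstacle here. The only step needing care is Step 1, namely checking that no "accidental" triangles arise. This follows because $Z$-vertices have exactly one $X$-neighbor and one $Y$-neighbor, which makes the whole count a deterministic function of $A$.
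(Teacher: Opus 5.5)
Your proposal is correct and follows essentially the same route as the paper: the same characterization of triangles as pairs $(x,i)$ with $A_{x,C(i)}=1$, the same case check over the three edge types for $\Delta_E\le 1$, and the scalar Chernoff bound for $|E|$ and $T$. The one difference is in $\Delta_V$: you note that $\Delta_V\le s$ holds deterministically, so a single vertex $x_0$ suffices for the lower bound, which avoids the paper's union bound over $X$ and is a harmless simplification. There is also a small slip: with $\mu=rs/2$ and $\delta=1/2$ the Chernoff exponent is $rs/24$ rather than $rs/12$, which does not affect the conclusion.
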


\begin{proof}
The sets $E_{XZ}$ and $E_{YZ}$ each have exactly $rs$ edges.  The number of $XY$-edges is $\operatorname{Bin}(rs,1/2)$, so $|E(G)|=\Theta(rs)$ with high probability. The triangles are exactly the pairs $(x,i)$ with $A_{x,C(i)}=1$.  Since $C$ is a bijection,
\[
    T(G)=\sum_{x\in X}\sum_{i=1}^s A_{x,C(i)}
        =\sum_{x\in X}\sum_{y\in Y}A_{x,y}
        \sim \operatorname{Bin}(rs,1/2).
\]
Hence $T(G)=\Theta(rs)$ with high probability.

Every edge lies in at most one triangle.  An $XY$-edge $\{x,y\}$ determines the unique block $i$ with $C(i)=y$ and then the unique vertex $F_i(x)$.  An $XZ$-edge $\{x,F_i(x)\}$ can only use $C(i)$.  A $YZ$-edge $\{C(i),z\}$ can only use the unique $x$ with $F_i(x)=z$.  Thus $\Delta_E(G)\le 1$ deterministically. Finally, for every $x\in X$,
\[
    \tau(x)=\sum_{i=1}^s A_{x,C(i)}
           =\sum_{y\in Y}A_{x,y}
           \sim \operatorname{Bin}(s,1/2).
\]
Since $r\le s$, Chernoff bounds and a union bound over $X$ give
$\tau(x)=\Theta(s)$ for all $x\in X$ with high probability.  Vertices in $Y$ are in at most $r\le s$ triangles, and vertices in $Z$ are in at most one. Therefore, $\Delta_V(G)=\Theta(s)$ with high probability.
\end{proof}

\subsection{Matrix Discrepancy and Matrix Concentration Bound} \label{subsec:triangle-matrix-estimates}

In this subsection, we establish the matrix discrepancy estimate and the matrix Chernoff bound
that will be used in the proof, and $K$ denotes a universal constant.

\begin{lemma}
\label{lem:matrix-khintchine-bounded-centered}
Let $\mathcal A$ be a finite index set. For each $a\in\mathcal A$, let $Q_a\succeq 0$ be a fixed operator on a $d$-dimensional Hilbert space, and let $\eta_a$ be an independent real random variable with $\mathbb E\eta_a=0$ and $|\eta_a|\le 1$.  Then
\[
    \mathbb E_{\eta}\left\|\sum_{a\in\mathcal A}\eta_aQ_a\right\|
    \le
    K\sqrt{\log(2d)}
    \left\|\left(\sum_{a\in\mathcal A} Q_a^2\right)^{1/2}\right\|.
\]
\end{lemma}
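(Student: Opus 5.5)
The plan is to reduce to the Rademacher case of \Cref{lem:self-adjoint-matrix-khintchine} by symmetrization followed by contraction, exactly as in the proof of \Cref{lem:collision-categorical-khintchine}. Let $\eta'=(\eta'_a)_{a\in\mathcal A}$ be an independent copy of $\eta$. Since $\mathbb E\eta'_a=0$, Jensen's inequality gives
\[
\mathbb E_\eta\Bigl\|\sum_a\eta_aQ_a\Bigr\|
=\mathbb E_\eta\Bigl\|\mathbb E_{\eta'}\sum_a(\eta_a-\eta'_a)Q_a\Bigr\|
\le\mathbb E_{\eta,\eta'}\Bigl\|\sum_a(\eta_a-\eta'_a)Q_a\Bigr\|.
\]
Each variable $\eta_a-\eta'_a$ is symmetric, and these variables are independent across $a$. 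Hence, for independent Rademacher signs $\varepsilon_a$ that are independent of everything else, the joint law is unchanged if we multiply each term by $\varepsilon_a$. This lets us write the right-hand side as
\[
\mathbb E_{\eta,\eta',\varepsilon}\Bigl\|\sum_a\varepsilon_a(\eta_a-\eta'_a)Q_a\Bigr\|.
\]

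Next, we condition on $(\eta,\eta')$. The coefficients $\alpha_a:=\eta_a-\eta'_a$ are then fixed real numbers with $|\alpha_a|\le 2$. We apply the contraction principle, \Cref{lem:banach-space-contraction}, in the Banach space of $d\times d$ matrices with the spectral norm. This yields
\[
\mathbb E_\varepsilon\Bigl\|\sum_a\varepsilon_a\alpha_aQ_a\Bigr\|\le 2\,\mathbb E_\varepsilon\Bigl\|\sum_a\varepsilon_aQ_a\Bigr\|.
\]
Since $Q_a\succeq0$, each $Q_a$ is self-adjoint, so \Cref{lem:self-adjoint-matrix-khintchine} applies. It bounds the last expectation by $K\sqrt{\log(2d)}\,\|(\sum_aQ_a^2)^{1/2}\|$. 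This bound does not depend on $(\eta,\eta')$, so averaging over them changes nothing. Absorbing the resulting factor of $2$ into the universal constant $K$ finishes the argument.

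No step here is a real obstacle. The only point needing care is the symmetrization equality: inserting the signs $\varepsilon_a$ is valid because the $\eta_a$ are independent across $a$ and $\eta'$ is an i.i.d.\ copy, so each $\eta_a-\eta'_a$ is symmetric and independent of the others. Two alternatives would not work. Applying contraction directly to $\sum_a\eta_aQ_a$ fails, because the $\eta_a$ are not symmetric. Using the sharper $\mathbb E|\eta_a-\eta'_a|$ is unnecessary. The bound $|\alpha_a|\le 2$ suffices because the constant $K$ is allowed to change.
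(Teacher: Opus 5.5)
Your proposal is correct and matches the paper's proof step for step. Both arguments symmetrize against an independent copy $\eta'$ via Jensen and insert Rademacher signs using the coordinatewise symmetry of $\eta_a-\eta'_a$. Both then apply the contraction principle with $|\eta_a-\eta'_a|\le 2$, finish with the self-adjoint matrix Khintchine inequality, and absorb the factor $2$ into $K$.
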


\begin{proof}
This can be proved by a standard symmetrization argument, followed by the Banach-space contraction principle, and the self-adjoint matrix Khintchine inequality stated in the preliminary. We describe the detail as follows:

Let $S(\eta):=\sum_a \eta_a Q_a $, and \(\eta'_a\) be an independent copy of \(\eta_a\), independent over \(a\) and independent of \(\eta\).  Since \(\mathbb E[\eta'_a]=0\), we have
\[
    \mathbb E_{\eta'}S(\eta')=0.
\]
Hence, for every fixed value of \(\eta\),
\[
S(\eta)=\mathbb E_{\eta'}\bigl(S(\eta)-S(\eta')\bigr).
\]
Applying Jensen's inequality to the convex function \(A\mapsto\|A\|\) gives
\[
\begin{aligned}
    \mathbb E_\eta\|S(\eta)\|
    &=
    \mathbb E_\eta
    \left\|
        \mathbb E_{\eta'}\bigl(S(\eta)-S(\eta')\bigr)
    \right\| 
    &\le
    \mathbb E_{\eta,\eta'}
    \left\|
        S(\eta)-S(\eta')
    \right\| 
    &=
    \mathbb E_{\eta,\eta'}
    \left\|
        \sum_a(\eta_a-\eta'_a)Q_a
    \right\|.
\end{aligned}
\]

The random vector \((\eta_a-\eta'_a)_a\) is symmetric coordinate-by-coordinate. Indeed, changing the sign of coordinate \(a\) has the same effect in law as swapping \(\eta_a\) and \(\eta'_a\).  Therefore, if \(\varepsilon_a\) are independent Rademacher signs, independent of \(\eta,\eta'\), then
\[
\begin{aligned}
    \mathbb E_{\eta,\eta'}
    \left\|
        \sum_a(\eta_a-\eta'_a)Q_a
    \right\|
    &=
    \mathbb E_{\eta,\eta',\varepsilon}
    \left\|
        \sum_a\varepsilon_a(\eta_a-\eta'_a)Q_a
    \right\|.
\end{aligned}
\]
Now condition on \(\eta,\eta'\) and apply the contraction principle in the Banach space of \(d\times d\) operators equipped with the operator norm \(\|\cdot\|\), using \Cref{lem:banach-space-contraction}.  Since
\[
    |\eta_a-\eta'_a|
    \le |\eta_a|+|\eta'_a|
    \le 2,
\]
the contraction principle~\cref{lem:banach-space-contraction} gives
\[
\begin{aligned}
    \mathbb E_\varepsilon
    \left\|
        \sum_a\varepsilon_a(\eta_a-\eta'_a)Q_a
    \right\|
    &\le
    2\,
    \mathbb E_\varepsilon
    \left\|
        \sum_a\varepsilon_a Q_a
    \right\|.
\end{aligned}
\]
Combining the preceding inequalities yields the symmetrized bound
\[
    \mathbb E_\eta
    \left\|
        \sum_a\eta_aQ_a
    \right\|
    \le
    2\,
    \mathbb E_\varepsilon
    \left\|
        \sum_a\varepsilon_aQ_a
    \right\|.
\]

It remains to control the Rademacher series.  Since \(Q_a\succeq0\), each \(Q_a\) is self-adjoint.  Applying \Cref{lem:self-adjoint-matrix-khintchine} with \(A_a=Q_a\), we obtain
\[
\begin{aligned}
    \mathbb E_\eta
    \left\|
        \sum_a\eta_aQ_a
    \right\|
    &\le
    2K\sqrt{\log(2d)}
    \left\|
        \left(\sum_a Q_a^2\right)^{1/2}
    \right\|.
\end{aligned}
\]
Absorbing the factor \(2\) into the universal constant \(K\) gives the desired inequality.
\end{proof}

\begin{remark}
The matrix Khintchine inequality can be viewed as the degree-one square-function form of the matrix hypercontractivity inequality. The standard matrix-valued Boolean hypercontractive inequality is not sufficient as a black box here: it gives a Frobenius-type variance $\left(\operatorname{Tr}\sum_a Q_a^2\right)^{1/2}$ rather than the operator square function $\left\|\left(\sum_a Q_a^2\right)^{1/2}\right\|$. This distinction matters because the Frobenius bound can lose a factor \(\sqrt d\), where \(d=2^k\), which would destroy the desired communication lower bound.
\end{remark}

\begin{lemma}
\label{lem:block-row-sampling}
For $x\in X$, $i\in[s]$, $z\in Z_i$, and $y\in Y$, let
$P_{x,i,z,y}\succeq 0$ be operators on a $d$-dimensional Hilbert space such that
\[
    P_{x,i,z,y}\preceq \frac1s I,
    \qquad
    \sum_{x,i,z,y}P_{x,i,z,y}\preceq I.
\]
For independent uniform matchings $F_i:X\to Z_i$, define
$Q_{x,y}(F):=\sum_{i=1}^{s}P_{x,i,F_i(x),y}$. Then
\[
    \mathbb E_F\max_{x,y}\|Q_{x,y}(F)\|
    \le
    K\left(\frac1r+\frac{\log(2d)+\log(rs)}{s}\right).
\]
\end{lemma}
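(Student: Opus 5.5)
The plan mirrors \Cref{lem:collision-color-row-sampling}. Fix $(x,y)$. The summands of $Q_{x,y}(F)=\sum_{i}P_{x,i,F_i(x),y}$ are independent over $i$, because the matchings $F_1,\dots,F_s$ are independent and each summand depends only on $F_i(x)$, which is uniform on $Z_i$. Each summand is PSD and bounded by $\frac1s I$. Its mean is
\[
\mathbb E_F Q_{x,y}(F)=\frac1r\sum_{i}\sum_{z\in Z_i}P_{x,i,z,y}=:\frac1r A_{x,y}.
\]
Here the row sums $A_{x,y}:=\sum_{i,z}P_{x,i,z,y}$ satisfy $A_{x,y}\succeq0$ and $\sum_{x,y}A_{x,y}\preceq I$, so $\|A_{x,y}\|\le 1$ and $\sum_{x,y}\operatorname{Tr}A_{x,y}\le d$. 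Thus $\|\mathbb E Q_{x,y}\|\le \frac1r$, which is the source of the $1/r$ term.

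First I would apply the matrix Chernoff upper tail to the independent PSD summands with uniform bound $R=1/s$ and mean norm $\mu\le 1/r$. In the form
\[
\Pr\Bigl[\|Q_{x,y}\|\ge C_0\bigl(\mu+R(u+\log(2d))\bigr)\Bigr]\le e^{-u},
\]
this gives
\[
\Pr\Bigl[\|Q_{x,y}(F)\|\ge C_0\Bigl(\tfrac1r+\tfrac{u+\log(2d)}{s}\Bigr)\Bigr]\le e^{-u}.
\]
This follows from the standard bound $\Pr[\lambda_{\max}\ge(1+\delta)\mu]\le d\,(e^\delta/(1+\delta)^{1+\delta})^{\mu/R}$. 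One uses it in the large-deviation regime, taking $\mu$ to be an upper bound $\max(1/r,\,\cdot)$ if needed, since the bound is monotone in the mean.

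Next I would take a union bound over all $(x,y)\in X\times Y$. There are $rs$ such pairs, so the threshold becomes
\[
C_0\Bigl(\tfrac1r+\tfrac{u+\log(2d)+\log(rs)}{s}\Bigr).
\]
Integrating the tail in $u\ge0$ gives $\mathbb E_F\max_{x,y}\|Q_{x,y}\|\le K\bigl(\frac1r+\frac{\log(2d)+\log(rs)}{s}\bigr)$, absorbing additive constants. There is a simplification compared to \Cref{lem:collision-color-row-sampling}: the index set $X\times Y$ is already polynomial in $rs$. So the big-row/small-row splitting by trace is unnecessary, although one could use it to replace $\log(rs)$ by $\log d$-type terms. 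Pairs with $\|A_{x,y}\|$ small are handled deterministically anyway, via $\|Q_{x,y}\|\le\|A_{x,y}\|$.

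The main obstacle is stating the matrix Chernoff bound correctly so that it yields an \emph{additive} $\frac1r+\frac{\log}{s}$ form rather than a multiplicative one with a mean that may be tiny. I would first try the form
\[
\mathbb E\|\textstyle\sum_i W_i\|\lesssim \|\sum_i\mathbb E W_i\|+R\log(2d),
\]
together with its tail analogue (Tropp, Theorem~5.1.1, with $\delta$ chosen as $e\cdot$threshold$/\mu$). I would then check that the exponent $\frac{t}{R}\log\frac{t}{e\mu}\ge \frac{t}{R}$ once $t\ge e^2\mu$. This gives $\Pr[\|Q\|\ge t]\le d\,e^{-ts}$ for $t\ge e^2/r$, and the union bound and integration then go through directly.
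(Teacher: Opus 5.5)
Your proof is correct and is essentially the paper's argument: fix $(x,y)$, observe that the summands $P_{x,i,F_i(x),y}$ are independent over $i$, PSD, bounded by $I/s$, with total mean at most $I/r$, derive a tail bound $e^{-u}$ at threshold $K\bigl(\frac1r+\frac{\log(2d)+u}{s}\bigr)$, then union bound over the $rs$ pairs and integrate. The only difference is the concentration inequality: the paper applies matrix Bernstein to the centered summands (variance proxy at most $1/(rs)$), whereas you use the matrix Chernoff upper tail directly, and your check that $t\ge e^2\mu$ gives $\Pr[\|Q_{x,y}\|\ge t]\le d\,e^{-ts}$ makes that step explicit.
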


\begin{proof}
Fix $(x,y)$ and set $S_i=P_{x,i,F_i(x),y}$.  Then, we know $S_i$s are independent, $0\preceq S_i\preceq I/s$, and
\[
    \mathbb E\sum_i S_i
    =
    \frac1r\sum_i\sum_{z\in Z_i}P_{x,i,z,y}
    \preceq \frac1r I.
\]
For the matrix Bernstein applied to $S_i-\mathbb E S_i$, the variance term is at most
\[
    \left\|\sum_i\mathbb E(S_i-\mathbb E S_i)^2\right\|
    \le
    \left\|\sum_i\mathbb E S_i^2\right\|
    \le
    \frac1s\left\|\mathbb E\sum_iS_i\right\|
    \le
    \frac1{rs}.
\]
Hence, for all $u\ge 0$,
\[
    \Pr\!\left[
        \|Q_{x,y}(F)\|
        >
        K\left(\frac1r+\frac{\log(2d)+u}{s}\right)
    \right]\le e^{-u}.
\]
Taking a union bound over the \(rs\) choices of \((x,y)\) and integrating the resulting tail bound gives
\[
\mathbb E_F \max_{x,y}\|Q_{x,y}(F)\|
\le
K\left(\frac1r+\frac{\log(2d)+\log(rs)}s\right). \qedhere
\]
\end{proof}

\subsection{One-Way Communication Lower Bound}
\label{subsec:triangle-one-way-quantum-lower-bound}
\begin{theorem}
\label{thm:triangle-finding-game-quantum-lb}
Assume $\sqrt{s}\le r\le s$.  Any one-way quantum communication protocol without shared entanglement that solves $\mathsf{TripTri}(r,s)$ with success probability at least $3/5$ must send $\Omega(\sqrt{s})$ qubits.
\end{theorem}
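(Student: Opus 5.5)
We follow the template of the collision-finding proof, instantiated with the notation of the technical overview. Fix a protocol with message dimension $d=2^k$: Alice sends $\rho_{F,A}$, and Bob, on input $C$, applies a POVM $\{M^C_{x,i,z}\}$. Here we may add a failure outcome, or merge it into an arbitrary triple, since that only increases the success probability. Define the averaged success operator
\[
H_{F,A}:=\mathbb E_C\sum_{x\in X}\sum_{i=1}^s A_{x,C(i)}M^C_{x,i,F_i(x)} .
\]
Then $p_{\mathrm{succ}}\le \mathbb E_{F,A}\|H_{F,A}\|$. Next write $A_{x,y}=\tfrac12+\eta_{x,y}$, where the $\eta_{x,y}$ are independent, centered, and satisfy $|\eta_{x,y}|\le 1$. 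For fixed $F$, the constant part is bounded by
\[
\tfrac12\,\mathbb E_C\sum_{x,i}M^C_{x,i,F_i(x)}\preceq \tfrac12 I .
\]
It follows that $p_{\mathrm{succ}}\le \tfrac12+\mathbb E_{F,A}\|G_{F,A}\|$, with $G_{F,A}=\sum_{x,y}\eta_{x,y}Q_{x,y}(F)$. Here
\[
P_{x,i,z,y}:=\mathbb E_C\bigl[\mathbf 1_{\{C(i)=y\}}M^C_{x,i,z}\bigr],\qquad Q_{x,y}(F)=\sum_i P_{x,i,F_i(x),y}.
\]
The identity for $G_{F,A}$ follows by splitting $\eta_{x,C(i)}=\sum_y \mathbf 1_{\{C(i)=y\}}\eta_{x,y}$ and exchanging the expectation over $C$ with the sum.

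We then check the packing hypotheses of \Cref{lem:block-row-sampling}. Each $P_{x,i,z,y}$ is PSD. Since $C$ is a uniform bijection, $\Pr[C(i)=y]=1/s$ and $M^C_{x,i,z}\preceq I$, so $P_{x,i,z,y}\preceq \frac1s I$. Summing over $y$ gives $\sum_y\mathbf 1_{\{C(i)=y\}}=1$, hence $\sum_{x,i,z,y}P_{x,i,z,y}=\mathbb E_C\sum_{x,i,z}M^C_{x,i,z}\preceq I$. Condition on $F$. Since $A$ is independent of $F$, \Cref{lem:matrix-khintchine-bounded-centered} applied to the family $\{Q_{x,y}(F)\}$ gives
\[
\mathbb E_A\|G_{F,A}\|\le K\sqrt{\log(2d)}\,\Bigl\|\bigl(\textstyle\sum_{x,y}Q_{x,y}(F)^2\bigr)^{1/2}\Bigr\| .
\]
Set $Z_F:=\max_{x,y}\|Q_{x,y}(F)\|$. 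Then $Q_{x,y}^2\preceq Z_F Q_{x,y}$, and $\sum_{x,y}Q_{x,y}(F)\preceq\sum_{x,i,z,y}P_{x,i,z,y}\preceq I$, because for each $(x,i)$ only the single $z=F_i(x)$ is used. The square function is therefore at most $\sqrt{Z_F}$.

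Taking the expectation over $F$ and applying Jensen, followed by \Cref{lem:block-row-sampling}, yields
\[
\mathbb E\|G\|\le K\sqrt{\log(2d)}\sqrt{\tfrac1r+\tfrac{\log(2d)+\log(rs)}{s}}.
\]
Because $r\ge\sqrt s$, we have $1/r\le 1/\sqrt s$. Hence $p_{\mathrm{succ}}\le \tfrac12+O\!\left(\sqrt{k/\sqrt s}+\sqrt{k(k+\log s)/s}\right)$, using $\log(rs)\le 2\log s$. Requiring $p_{\mathrm{succ}}\ge 3/5$ forces the error term to be at least $1/10$. Then either $k\gtrsim \sqrt s$ directly from the first term, or $k(k+\log s)\gtrsim s$, which gives $k\gtrsim\sqrt s$ once $\log s=o(\sqrt s)$, as in the final step of the collision proof.

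The main obstacle is obtaining a bound strong enough to beat the constant $\tfrac12$ baseline with explicit constants. The first term contributes $\sqrt{k/r}$, so with $r$ as small as $\sqrt s$ it only yields $k\gtrsim\sqrt s$ if the universal constants cooperate; the gap between $1/2$ and $3/5$ is what must absorb these constants. To handle this, we would restate the conclusion asymptotically: for $k\le c\sqrt s$ with small $c$, the advantage is at most $O(\sqrt c)+o(1)<1/10$. This only requires $K$ to be an absolute constant, which \Cref{lem:matrix-khintchine-bounded-centered,lem:block-row-sampling} provide. A secondary care point is confirming that the decomposition of $G_{F,A}$ through $\eta_{x,C(i)}$ is exact and that no extra factor of $r$ enters the packing sum.
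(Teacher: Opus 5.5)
Your proposal is correct and follows the paper's proof essentially step for step. It uses the same centering $A_{x,y}=\tfrac12+\eta_{x,y}$, the same packing $P_{x,i,z,y}=\mathbb E_C[\mathbf 1_{\{C(i)=y\}}M^C_{x,i,z}]$ with block operators $Q_{x,y}(F)$, the same bound $\sum_{x,y}Q_{x,y}(F)^2\preceq Z_F I$, then \Cref{lem:matrix-khintchine-bounded-centered} and \Cref{lem:block-row-sampling}, and the same final choice $k\le c\sqrt s$. The differences are cosmetic: you make the Jensen step $\mathbb E_F\sqrt{Z_F}\le\sqrt{\mathbb E_F Z_F}$ explicit, and you should write $k+1$ instead of $k$ in the error term so that $\log(2d)$ is covered when $k$ is tiny.
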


\begin{proof}
Suppose Alice sends $k$ qubits, and let $d=2^k$.  On input $(F,A)$, Alice sends a density matrix $\rho_{F,A}$.  On input $C$, Bob uses a POVM
\[
    \{M_{x,i,z}^C:x\in X,\ i\in[s],\ z\in Z_i\}\cup\{M_\perp^C\}.
\]

For fixed $(F,A)$, define Bob's average success operator $H_{F,A}:=
 \mathbb E_C\sum_{x\in X}\sum_{i=1}^{s} A_{x,C(i)}\,M_{x,i,F_i(x)}^C$
 and the success probability is $p_{\rm succ}= \mathbb E_{F,A}\operatorname{Tr}(\rho_{F,A}H_{F,A})$.

Define $A_{x,y}=1/2+\eta_{x,y}$.  The constant part contributes at most
$\frac12 I$, because for each $C$ we sum only POVM elements.  Therefore
\begin{equation}
\label{eq:triangle-success-decomposition}
    p_{\rm succ}
    \le
    \frac12+\mathbb E_{F,A}\|G_{F,A}\|,
    \qquad
    G_{F,A}:= \mathbb E_C\sum_{x,i}\eta_{x,C(i)}M_{x,i,F_i(x)}^C .
\end{equation}

We decompose $G_{F, A}$ as a centered sum over the independent $XY$-bits.  Define
\[
    P_{x,i,z,y}
    :=
    \mathbb E_C\!\left[\mathbf 1_{\{C(i)=y\}}M_{x,i,z}^C\right].
\]
Then
\[
P_{x,i,z,y}\preceq I/s  \text{ and } \sum_{x,i,z,y}P_{x,i,z,y}
=\mathbb E_C\sum_{x,i,z}M_{x,i,z}^C \preceq I.
\]

For fixed $F$, let $Q_{x,y}(F):=\sum_{i=1}^{s}P_{x,i,F_i(x),y}$.
Then
\[
    G_{F,A}=\sum_{x\in X}\sum_{y\in Y}\eta_{x,y}Q_{x,y}(F).
\]
Conditioning on $F$ and applying Lemma~\ref{lem:matrix-khintchine-bounded-centered},
\[
    \mathbb E_A[\|G_{F,A}\|\mid F]
    \le
    K\sqrt{\log(2d)}
    \left\|\left(\sum_{x,y}Q_{x,y}(F)^2\right)^{1/2}\right\|.
\]
Let $Z_F=\max_{x,y}\|Q_{x,y}(F)\|$.   Since each \(Q_{x,y}(F)\) is positive semidefinite and satisfies \(\|Q_{x,y}(F)\|\le Z_F\), we have
\[
Q_{x,y}(F)^2\preceq Z_F Q_{x,y}(F).
\]
Moreover, the packing condition gives $\sum_{x,y}Q_{x,y}(F)\preceq I$.
Therefore,$\sum_{x,y}Q_{x,y}(F)^2\preceq Z_F I$, so the square-function term is at most \(\sqrt{Z_F}\).
 Lemma~\ref{lem:block-row-sampling}
gives
\[
    \mathbb E_{F,A}\|G_{F,A}\|
    \le
    K\sqrt{(k+1)\left(\frac1r+\frac{k+1+\log(rs)}{s}\right)}.
\]

If $k\le c\sqrt{s}$, then $\sqrt{s}\le r\le s$ and
$\log(rs)=o(\sqrt{s})$ make the last display at most
$K\sqrt{c+c^2+o(1)}$.  For a sufficiently small constant $c>0$, this is less than $1/10$.  By \eqref{eq:triangle-success-decomposition}, such a protocol has success probability less than
$3/5$.  Hence $k=\Omega(\sqrt{s})$.
\end{proof}

\begin{proof}[Proof of Theorem~\ref{thm:introduction-triangle-informal}]
The case $\Delta_V=O(1)$ only asks for a constant lower bound, so assume $\Delta_V\to\infty$. For a target value of $\Delta_V$, choose $s=\Theta(\Delta_V)$ and set $r=\lceil\sqrt{s}\rceil$.

Then, $\sqrt{s}\le r\le s$.  Since $s\le m^{2/3}$, one graph instance of \Cref{def:tri-distrib} has $rs=\Theta(s^{3/2})\le \Theta(m)$ edges. Take $q=\Theta\!\left(\frac{m}{rs}\right)$ disjoint copies of the graph instance, all generated from the same communication input $(F_1,\ldots,F_s,A,C)$ but placed on disjoint vertex sets.  Adding isolated dummy edges, if necessary, makes the total edge count exactly $m$ and does not change any triangle parameter.

By Proposition~\ref{prop:tripartite-basic-properties}, one graph instance has
$\Theta(rs)$ edges, $\Theta(rs)$ triangles, $\Delta_E\le 1$, and
$\Delta_V=\Theta(s)$ with probability $1-o(1)$.  Therefore, the disjoint union
has
\[
    |E|=\Theta(qrs)=\Theta(m),\qquad
    T=\Theta(qrs)=\Theta(m),\qquad
    \Delta_V=\Theta(s),
\]
and still has $\Delta_E\le 1$.

Suppose a one-pass quantum streaming algorithm succeeds with probability at
least $2/3$ and uses $S$ qubits.  Alice runs it on all Alice-owned edges in the
$q$ copies and sends the $S$-qubit memory state to Bob.  Bob continues the
stream with all Bob-owned edges.  Any triangle output by the algorithm lies in
one copy and has the form $(x,C(i),F_i(x))$ inside that copy, with
$A_{x,C(i)}=1$.  Ignoring the copy label, Bob obtains a valid output
$(x,i,F_i(x))$ for $\mathsf{TripTri}(r,s)$.

The bad event in Proposition~\ref{prop:tripartite-basic-properties} has
probability $o(1)$, so the induced protocol succeeds with probability
$2/3-o(1)$, which is at least $3/5$ for large enough instances.  Therefore,
Theorem~\ref{thm:triangle-finding-game-quantum-lb} gives
$S=\Omega(\sqrt{s})$.  Since $s=\Theta(\Delta_V)$ and $T=\Theta(m)$, we have:
\(
    S=\Omega(\sqrt{\Delta_V})
\).
\end{proof}

\section*{Acknowledgment}
The authors used ChatGPT Pro 5.4 as an AI-assisted research and writing tool throughout the preparation of this manuscript. The tool was used to help brainstorm ideas and explore proof strategies. Portions of the manuscript text were redrafted or modified with AI assistance across all sections. All final mathematical claims, algorithms, proofs, citations, and wording were reviewed, edited, and validated by the authors. The authors assume responsibility for all content of the submission.
\bibliographystyle{alpha}
\bibliography{ref}

\end{document}